\def\1{\mathbf{1}}
\def\0{\mathbf{0}}
\newcommand*{\coloneqq}{\mathrel{\vcenter{\baselineskip0.5ex \lineskiplimit0pt \hbox{\scriptsize.}\hbox{\scriptsize.}}} =}
\DeclareMathOperator{\tr}{tr}
\DeclareMathOperator{\Tr}{Tr}
\newcommand{\ket}[1]{| #1 \rangle}
\newcommand{\bra}[1]{\langle #1 |}
\newtheorem{prop}{Proposition}
\newtheorem{proposition}[prop]{Proposition}
\newcommand{\processnext}[1]{%
	\ifx\listfinish#1\empty\else\listact{#1}\expandafter\processnext\fi}
\newcommand{\bdm}[1]{d^{\,2m}\!\bm{#1}\,}
\newcommand{\SQPD}{\hbox{($\bm{S}$)-PQD}}
\newcommand{\ignore}[1]{}
\newcommand{\nobibentry}[1]{{\let\nocite\ignore\bibentry{#1}}}
\newcommand{\average}[1]{\left<#1\right>}
\let\oldsqrt\sqrt
\def\sqrt{\mathpalette\DHLhksqrt}
\def\DHLhksqrt#1#2{\setbox0=\hbox{$#1\oldsqrt{#2\,}$}\dimen0=\ht0
	\advance\dimen0-0.2\ht0
	\setbox2=\hbox{\vrule height\ht0 depth -\dimen0}%
	{\box0\lower0.4pt\box2}}
\begin{document}
\title{Thermometry of Gaussian quantum systems using Gaussian measurements}

\author{Marina F.B. Cenni}
\address{ICFO-Institut de Ciencies Fotoniques, The Barcelona Institute of Science and Technology, 08860 Castelldefels (Barcelona), Spain}
\author{Ludovico Lami}
\address{Institut f\"{u}r Theoretische Physik und IQST, Universit\"{a}t Ulm, Albert-Einstein-Allee 11, D-89069 Ulm, Germany}
\author{Antonio Ac\'in}
\address{ICFO-Institut de Ciencies Fotoniques, The Barcelona Institute of
Science and Technology, 08860 Castelldefels (Barcelona), Spain}
\address{ICREA-Instituci\'o Catalana de Recerca i Estudis Avan\c{c}ats, 08010, Barcelona, Spain}
\author{Mohammad Mehboudi}
\address{D\'epartement de Physique Appliqu\'ee, Universit\'e de Gen\`eve, 1205 Gen\`eve, Switzerland}
\begin{abstract}
We study the problem of estimating the temperature of Gaussian systems with feasible measurements, namely Gaussian and photo-detection-like measurements. For Gaussian measurements, we develop a general method to identify the optimal measurement numerically, and derive the analytical solutions in some relevant cases.
%Generally, the answer can be found numerically, however in special cases it can be done analytically as well.
For a class of single-mode states that includes thermal ones, the optimal Gaussian measurement is either Heterodyne or Homodyne, depending on the temperature regime. This is in contrast to the general setting, in which a projective measurement in the eigenbasis of the Hamiltonian is optimal regardless of temperature. In the general multi-mode case, and unlike the general unrestricted scenario where joint measurements are not helpful for thermometry (nor for any parameter estimation task), it is open whether joint Gaussian measurements provide an advantage over local ones. We conjecture that they are not useful for thermal systems, supported by partial analytical and numerical evidence. We further show that Gaussian measurements become optimal in the limit of large temperatures, while {on/off} photo-detection-like measurements do it for when the temperature tends to zero. Our results therefore pave the way for effective thermometry of Gaussian quantum systems using \textit{experimentally realizable measurements}.
\end{abstract}
\maketitle
%\newpage
%\onecolumngrid
\tableofcontents
\section{Introduction and motivation}\label{intro}
%
%\moha{look at this too: \hyperlink{https://iopscience.iop.org/article/10.1088/2058-9565/abd83d/pdf}{https://iopscience.iop.org/article/10.1088/2058-9565/abd83d/pdf}}\\
The high precision required for the operation of quantum devices demands for the estimation of their parameters with minimum possible error.
In particular, estimating temperature is a problem that has attained substantial popularity in recent years~\cite{Mehboudi_2019,de2018quantum}, since it is a parameter that highly impacts the technological usage of quantum systems. 
Thermometry process, like any other parameter estimation problem, can be seen as different stages of (i) preparation of a probe or thermometer, (ii) interaction of the probe with the sample, and (iii) measurement of the probe and data analysis~\cite{giovannetti2011advances,PhysRevLett.96.010401,T_th_2014}, see also Fig.~\ref{fig:my_label}. 
When there are no restrictions on any of these stages, one finds the \textit{ultimate} limits on thermometry precision~\cite{Mehboudi_2019,de2018quantum}. 
These limits have deeply increased our understanding of fundamental restrictions/possibilities in any thermometry procedure~\cite{PhysRevLett.114.220405}. Nonetheless, one also needs to address frameworks that are experimentally feasible, hence revisit the fundamental limits subject to experimental restrictions on the different stages above. In this work, we are in particular concerned with restrictions on the performed measurements, stage (iii) above.
%\deleted{, i.e., we are restricted to a specific set of measurements}. 
To our knowledge, there have been so far few works addressing measurement restrictions, such as limited resolution~\cite{PhysRevResearch.2.033394} or coarse grained measurements~\cite{PRXQuantum.2.020322}, in thermometry applications. 

Our main goal is to study the use of experimentally feasible measurements in thermometry processes using continuous variable (CV) Gaussian systems.
%\deleted{ subject to experimental limitations on the measurement stage}. 
CV Gaussian models are very successful at describing various physical systems of relevance, such as Bosonic gases, electromagnetic fields, {mechanical oscillators} and Josephson junctions, to name a few. This makes them versatile candidates for several quantum information processing as well as quantum simulation tasks~\cite{RevModPhys.84.621, doi:10.1080/00018730701223200, lewenstein2012ultracold, serafini2017quantum, e17096072, PhysRevLett.94.020505}. When there is no limitation on the performed measurements, thermometry (and more generally metrology) of Gaussian CV systems is characterized readily, in particular the optimal measurement and the ultimate precision can be found analytically~\cite{monras2013phase,PhysRevA.89.032128,PhysRevA.98.012114}. Nonetheless, a rigorous analysis of quantum metrology with realistic measurements, namely Gaussian as well as {two-outcome} photo-detection measurements, is missing. Our work addresses this question in the context of temperature estimation. 
%\replaced{}{The aim of this work is to partly answer this question, specifically for temperature estimation}.

Several previous theoretical works have already focused on thermometry via Gaussian probes, e.g., exploiting Bosonic impurities embedded in cold gases, or light modes that interact with an external field~\cite{PhysRevLett.122.030403,planella2020bath,PhysRevA.96.062103,PhysRevB.98.045101,mukherjee2019enhanced,Mart_n_Mart_nez_2013}. Unfortunately, in these systems the optimal measurement for thermometry can be experimentally very demanding and effectively out of reach. For example, consider a scenario where a light mode is used as thermometer. After interacting with a sample with an unknown temperature $T$, the light mode can relax to a thermal state with the same temperature. By performing measurements on the light mode one aims to estimate the temperature. The maximally informative measurement in this case is an energy measurement. This corresponds to ideal \textit{photon-counting}, which is extremely difficult. It is thus advisable to investigate experimentally realisable measurements, such as realistic {on/off} photo-detection or Gaussian measurements and benchmark their performance by comparing them with the known optimal measurement. In this work, we follow this approach and first consider single-mode systems. We analytically prove that the optimal Gaussian measurement corresponds to a homodyne or heterodyne measurement, depending on the temperature. We investigate the conditions under which the ultimate bound on thermometry precision is attained.
The appendix is there for self-containedness....
We find that at high temperatures Gaussian measurements are almost optimal, while they perform poorly in the low-temperature regime, where, in turn, {on/off} photo-detection is close to optimal.
%In particular, we see that at high $T$ Gaussian measurements are optimal. While at low $T$ they have a disappointing performance, ideal photo-detection is optimal at this regime. 
We then move to the multi-mode scenario. While in the case where measurements are not restricted the additivity of the Fisher information implies that the optimal measurement consist of the independent application of the optimal measurement for single modes, whether the same holds true when restricting the measurements to be Gaussian is open.
%\replaced{}{Furthermore, we study the best Gaussian measurements in the multimode scenario %\tcr{[LL: do we mean multi-copy scenario? A single system can have multiple modes...]}. 
%We show that unlike %the ultimate bound (i.e., without restricting to Gaussian measurements)
%in the unrestricted case, this problem is non-trivial. When optimising over all measurements, the optimal measurement turns out to be local---due to additivity of the quantum Fisher information. However, when restricting to Gaussian measurements, it is not clear whether joint measurements on all modes are beneficial or not.} 
We conjecture---supported by numerical evidence---that this is indeed the case, and analytically prove it for special cases. 
%Finally, we address the equivalence between the error-propagation formula and the mean-square error for homodyne like measurements. \added{DO YOU WANT TO INCLUDE THIS, IMO, RATHER SPECIFIC RESULT HERE?}

\begin{figure}
    \centering
    \includegraphics[width=0.99\linewidth]{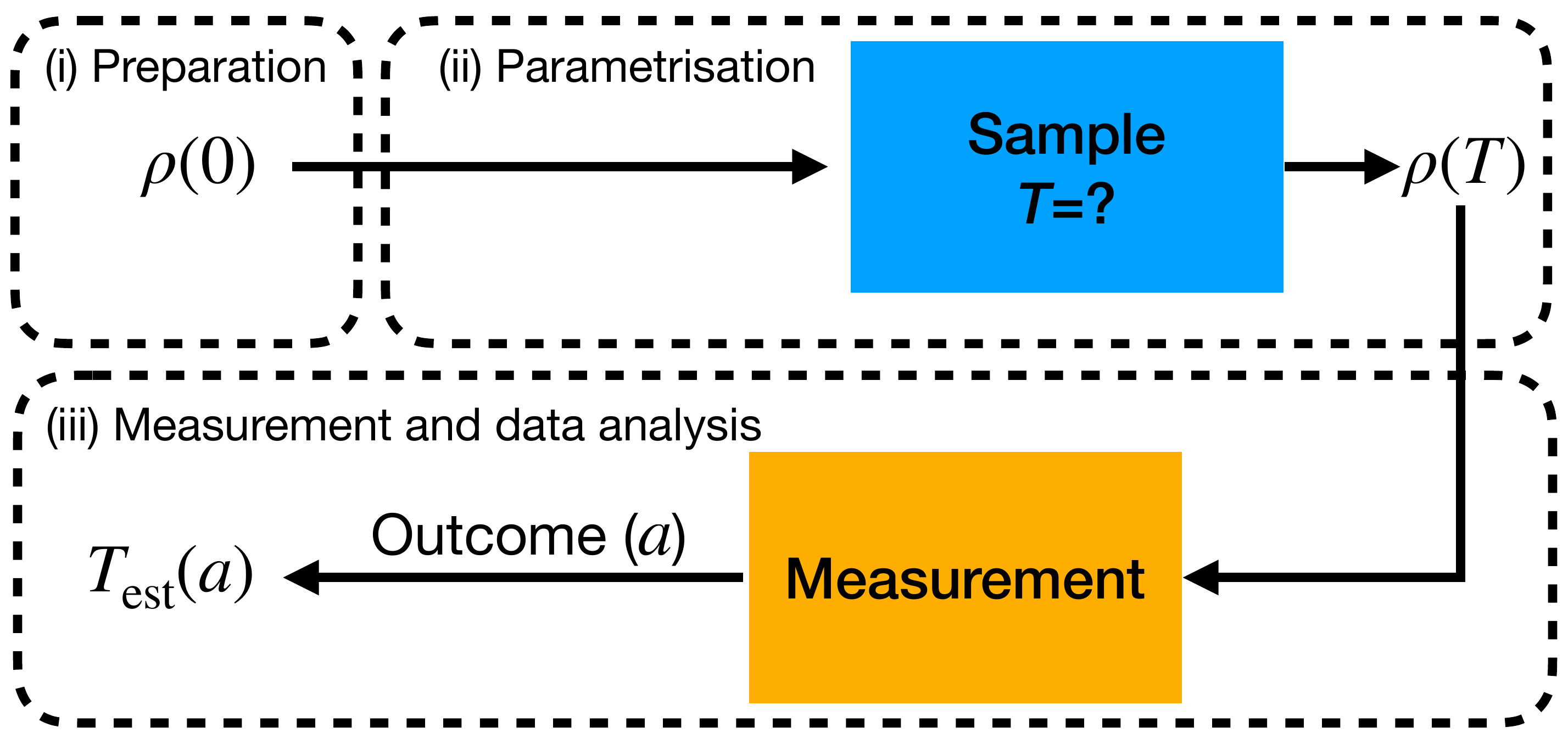}
    \caption{Thermometry protocols can be broken into three parts of (i) preparation of a probe, (ii) interaction with the sample and parametrisation, and (iii) measurement and data analysis. The focus of this work is on the last part of the protocol (iii). In particular, we explore the precision bounds restricted to experimentally feasible measurements, namely Gaussian or {on/off} photo-detection measurements.}
    \label{fig:my_label}
\end{figure}

\section{Preliminaries: The Phase-space formalism}\label{2}

We start by setting up the notation and introducing the necessary tools from the phase-space formalism, such as Gaussian systems and Gaussian measurements.
%\deleted{, before we proceed further}. 
For additional details, we refer the interested reader to the more complete references~\cite{HOLEVO,BARNETT-RADMORE, serafini2017quantum}. The basics of parameter estimation in Gaussian systems is discussed in the next section, where we also present some of our results.

\subsection{Gaussian States and Measurements}\label{sec:Gaussian_states_measure}

Every trace class\footnote{A linear bounded operator $T:\mathcal{H}\to \mathcal{H}$ acting on a separable Hilbert space $\mathcal{H}$ is said to be of trace class if $\Tr|T|<\infty$, where $|T|\coloneqq \sqrt{T^\dag T}$, and for a positive semidefinite $A\geqslant 0$ the formal trace $\Tr A\coloneqq \sum_i \bra{i}A\ket{i}$ can be shown not to depend on the chosen orthonormal basis $\{\ket{i}\}_i$ of $\mathcal{H}$.} operator defined in a Hilbert space has an equivalent representation in terms of $\bm{S}$-ordered Phase-space Quasi-probability Distributions [\SQPD s], which are distributions defined over a real symplectic space that is referred to as phase-space. In order to characterize this representation, we first define the frame operators. These operators are denoted by $\Delta(\bm z)$ and are often used to define the $\bm{S}$-ordered quasi-probability distributions $W( {O},\bm z)\coloneqq \tr [{O}\Delta(\bm z)]$ corresponding to the operator $ {O}$, that can either be an observable or a density operator in the Hilbert space\footnote{Throughout this paper we use a notation in which phase space related quantities are in bold font, e.g., ${\bm \sigma}$ and ${\bm d}$ represent the covariance matrix and displacement vector. On the contrary, we write the operators in the Hilbert space of the system in normal font, e.g., ${\rho}$ represents the density matrix. For the positive operator-valued measure (POVM) elements like ${M_{{\bm a}|{s}}}$ the operator itself is in the Hilbert space and represented with normal font, but the outcome is often a phase-space vector like ${\bm a}$ thus we use the bold font.}.

First we note that, for a given quantum state $\rho$ and any set of POVM operators $\{{M}_{{{\bm a}|s}}\}$---with ${s}$ labeling the specific choice of POVM and ${\bm a}$ the different outcomes for that given choice---the outcome probabilities for the measurements are given by the Born rule,
\begin{equation}
	P(\bm{a}|\rho;s) = \tr [\rho {M}_{\bm{a}|{s}}].\label{eq:Born_rule}
\end{equation}
To show how any $m$-mode operator like $\rho$ can be represented by \SQPD s in phase-space, we start by expanding $\rho$ in terms of the canonical displacement operators, %(CITE?)
\begin{align}\label{eq:spqd_1}
		\rho & = \frac{1}{(2\pi)^{m}}\int d^{2m}\bm{y} \tr [\rho D(\bm{y})] D(-\bm{y})
\end{align}
where $D(\bm y) \coloneqq \exp(-i\bm{y}^{\intercal} \bm{\Omega} R)$ is the displacement (or Weyl) operator, $R^{{\intercal}}=(q_1,p_1,\dots, q_m,p_m)$ is the vector of canonical quadrature operators, $[q_j,p_k]=i\delta_{j,k}$, and $\bm{y}, \bm{z} \in \mathbb{R}^{2m}$, and $\bm\Omega=\bigoplus_{j=1}^{m}\left(\begin{smallmatrix}0 & 1\\ -1 & 0\end{smallmatrix}\right)$ is the symplectic form enforcing the uncertainty relations in phase-space.

The $\Delta(\bm z)$ operators are defined as the Fourier transforms of the displacement operator
\begin{equation}
	\Delta(\bm z) \coloneqq \int\frac{\bdm{y}}{(2\pi)^{2m}}\,
	D(\bm y)\,
	e^{-i\bm{z}^{{\intercal}}\bm{\Omega}\bm{y}} ,
\end{equation}
where the integral is intended to converge in the weak operator topology. The Wigner function for the density operator $\rho$ is then given by
\begin{align}
	W(\bm{z}|\rho) & = \tr \big[\rho\Delta(\bm z)\big] = \int\frac{\bdm{y}}{(2\pi)^{2m}}\,
	\tr [\rho D(\bm y)]\,
	e^{-i\bm{z}^{{\intercal}}\bm{\Omega}\bm{y}}.
	\label{WS}
\end{align}

Since $\tr [D(\bm y)]=(2\pi)^m\delta^{2m}(\bm{y})$, the operator $\Delta(\bm z)$ satisfies the following relations:
\begin{equation}
	\tr [\Delta(\bm z)]=\int\frac{\bdm{y}}{(2\pi)^{2m}}\,
	\tr [D(\bm y)]\,
	e^{-i\bm{z}^{{\intercal}}\bm{\Omega}\bm{y}}=\frac{1}{(2\pi)^m},
	\label{PQD-identity}
\end{equation}
and 
\begin{equation}
	\int d^{2m}\bm{z}\, \Delta(\bm z)\hspace{-1mm}=\hspace{-2mm} \int\frac{\bdm{y}}{(2\pi)^{2m}}\,
	D(\bm y)\,
	\hspace{-2mm}
	\int d^{2m} \bm{z} e^{-i\bm{z}^{{\intercal}}\bm{\Omega}\bm{y}}=I,
\end{equation}
where we have used that
\begin{align}
	\int d^{2m} \bm{z} e^{-i\bm{z}^{\intercal}\bm{\Omega}\bm{y}} = (2\pi)^{2m} \delta^{2m}(\bm{y}).\label{Four-delta-func}
\end{align}

By taking the inverse Fourier transform of Eq.~(\ref{WS}) and using Eq.~(\ref{Four-delta-func}), one obtains 
\begin{align}
	\int d^{2m}\bm{z}\, W(\bm{z}|\rho) e^{i\bm{z}^{{\intercal}}\bm{\Omega}\bm{y}}=
	\tr [\rho D(\bm{y})]. \label{eq:spqd_2}
\end{align}
Substituting Eq. \eqref{eq:spqd_2} into Eq. \eqref{eq:spqd_1} gives
\begin{align}\label{eq:DM}
	\begin{split}
		\rho & = \frac{1}{(2\pi)^{m}}\int d^{2m} \bm{z}\, W(\bm{z}|\rho) \int d^{2m}\bm{y}  D(-\bm{y})\, e^{i\bm{y}^{{\intercal}}\Omega \bm{y}} \\
		&= (2\pi)^{2m}\int d^{2m} \bm{z}\, W(\bm{z}|\rho) \Delta(\bm{z}).
	\end{split}
\end{align}

Finally, by using these results in Eq.~\eqref{eq:Born_rule}, we obtain
\begin{align}\label{eq:Born_phase_space}
	%	p(a|x)&=
	P(\bm{a}|\rho;s)&=\tr [\rho {M}_{\bm{a}|{s}}]\nonumber\\
	&=(2\pi)^M \int \bdm{z} W(\bm z|\rho)\, W(\bm{a}|{s},\bm z),
\end{align}
where $W(\bm{a}|{s},\bm z)=\tr \big[{M}_{\bm{a}|{s}} \Delta(\bm{z})\big]$ is the Wigner function associated to the POVM measurement operators.\footnote{The only measurement operators we will encounter will be of trace class.} From this relation, we see that the information contained in the probabilities given by Eq.~\eqref{eq:Born_rule} is equivalently contained in the integral overlap of the corresponding Wigner functions. 
\subsubsection{The overlap of two systems}
For any two quantum states $\rho_1$ and $\rho_2$  we can use \eqref{eq:spqd_1} to compute their overlap
\begin{align}
\tr [\rho_1 \rho_2] 
% & = (2\pi)^{m} \int d^{2m}\bm{z} W(\rho_1|\bm{z}) W(\bm{z}|\rho_{2})\nonumber\\
% & = \frac{1}{(2\pi)^{3m}}\iint\int d^{2m} \bm{z} d^{2m}\bm{y} d^{2m}\bm{x} \tr [\rho_1 D(\bm{y})]\tr [\rho_2 D(\bm{x})] e^{-i\bm{z}\bm{\Omega}\bm{y}^\intercal}e^{-i\bm{z}\bm{\Omega}\bm{x}^\intercal}\nonumber\\
% & = \frac{1}{(2\pi)^{m}}\iint d^{2m}\bm{y} d^{2m}\bm{x} \tr [\rho_1 D(\bm{y})]\tr [\rho_2 D(\bm{x})]\delta^{2m}(\bm{y}+\bm{x})\nonumber\\
& = \frac{1}{(2\pi)^{m}}\int d^{2m}\bm{y} \tr [\rho_1 D(\bm{y})]\tr [\rho_2 D(-\bm{y})],
\end{align}
Now, the term $\tr [D(\bm{y})\rho_i]$ is nothing but the characteristic function of the $i$th system. That is, by definition,
\begin{align}
\chi(\bm{y}|\rho_i) \coloneqq \tr [D(\bm{y})\rho_i].
\end{align}
For Gaussian systems $\rho_i$ with covariance matrices ${\bm \sigma}_i$ and displacement operators ${\bm d}_i$, the characteristic function is given by the Gaussian distribution
\begin{align}
\chi(\bm{y}|\rho_i) = e^{\frac{1}{2}\bm{y}^{{\intercal}}\bm{\Omega} \bm{\sigma}_i \bm{\Omega} \bm{y} + i \bm{d}_i^{{\intercal}}\bm{\Omega} \bm{y}},
\end{align}
with the displacement vector and the covariance matrix (CM) defined as $\bm{d}_i|_k = \average{R_k}_{\rho_i}$, and $\bm{\sigma}_i|_{kl} = \average{\{R_k,R_l\}_+}_{\rho_i}/2-\average{R_k}_{\rho_i}\average{R_l}_{\rho_i}$, where $\average{A}_{\rho_i} \coloneqq \tr [A\rho_i]$ and $\{ , \}_+$ represents the anti-commutator. 
%[Notice that some authors define the CM as twice this definition, and therefore in the definition of the characteristic function the $-1/2$ is replaced by $-1/4$. Note also that in the paper of R.Nichols they make a mistake in their eq.4, with their definitions this should be -1/4.\textbf{(Nina: do we include this note?)}].
Having this form for the Gaussian characteristic function, we can write
\begin{align}
\tr [\rho_1 \rho_2] & =\frac{1}{(2\pi)^{m}}\int d^{2m}\bm{y} \tr [\rho_1 D(\bm{y})]\tr [\rho_2 D(-\bm{y})]\nonumber\\
& = \frac{1}{(2\pi)^{m}}\int d^{2m}\bm{y} e^{\frac{1}{2}\bm{y}^\intercal\bm{\Omega} (\bm{\sigma}_1+\bm{\sigma}_2) \bm{\Omega} \bm{y}}\nonumber\\
&~~\times e^{ i (\bm{d}_1-\bm{d}_2)^\intercal\bm{\Omega} \bm{y} }\nonumber\\
& = \frac{1}{(2\pi)^{m}}\int d^{2m}\bm{y} e^{\frac{1}{2}\bm{y}^\intercal{\tilde{\bm{\sigma}}} \bm{y} + i {\tilde {\bm{d}}}^\intercal \bm{y} },
\end{align}
where we define
${\tilde{\bm{\sigma}}} \coloneqq \bm{\Omega} (\bm{\sigma}_1+\bm{\sigma}_2) \bm{\Omega}$, and ${\tilde{\bm{d}}}^\intercal \coloneqq (\bm{d}_1-\bm{d}_2)^\intercal\bm{\Omega}$.
In order to solve this last integral, we need to find the orthogonal basis of ${\tilde{\bm{\sigma}}}$, such that the integration variables decouple. Performing this elementary transformation yields
%We use the final result from \href{https://en.wikipedia.org/wiki/Common_integrals_in_quantum_field_theory}{here} to obtain
\begin{align}\label{eq:Gaussian_Multid_int}
\tr [\rho_1 \rho_2] & = \frac{1}{(2\pi)^{m}}\int d^{2m}\bm{y} e^{\frac{1}{2}\bm{y}^\intercal{\tilde{\bm{\sigma}}} \bm{y} + i {\tilde {\bm{d}}}^\intercal \bm{y} } \nonumber\\
&= \frac{1}{(2\pi)^{m}}\sqrt{\frac{(2\pi)^{2m}}{{\rm det} {\tilde {\bm{\sigma}}}}} e^{\frac{1}{2}\tilde{\bm{d}}^\intercal{\tilde{\bm{\sigma}}}^{-1} \tilde{\bm{d}}} \nonumber\\
& = \frac{e^{-\frac{1}{2}{(\bm{d}_1-\bm{d}_2)^\intercal}{(\bm{\sigma}_1+\bm{\sigma}_2)}^{-1}{(\bm{d}_1-\bm{d}_2)}}}{\sqrt{{\rm det} (\bm{\sigma}_1+\bm{\sigma}_2)}} .
\end{align} 
The Gaussian measurement operators of interest here can be written as displaced Gaussian states $\rho_{s}^M$, with the displacement vector and the covariance matrix given by ${\bm d}_{s}^M$ and ${\bm \sigma}_{s}^M$, respectively.\footnote{When necessary we use the superscript $M$ to refer to measurement operators, displacement vectors, and covariance matrices, to prevent confusion with those of the system.} We can write
\begin{align}\label{eq:simple_Gaussian}
	{M}_{\bm{a}|{s}} = D(\bm{a}) \rho_{s}^M D^{\dagger}(\bm{a})/(2\pi)^m.
\end{align}
Our aim is then to find the characteristic function of this Gaussian POVM element. By using the cyclic property of the trace, and the properties of the displacement operator $D(\bm{y})D(\bm{a}) = \exp[-i\bm{y}^\intercal\bm{\Omega}\bm{a}/2]D(\bm{y}+\bm{a})$, and $D^{\dagger}(\bm{y}) = D(-\bm{y})$, we have that
\begin{align}
\chi(\bm{y}|{M}_{\bm{a}|{s}}) & = \tr  [D(\bm{a}) \rho_{s}^M D^{\dagger}(\bm{a}) D(\bm{y})]/(2\pi)^m\nonumber\\
& = \frac{e^{i\bm{a}^\intercal\bm{\Omega}\bm{y}}}{(2\pi)^m} \chi(\bm{y}|\rho_{s}^M),
\end{align}
which means that---up to a $1/(2\pi)^m$ constant---${M}_{\bm{a}|{s}}$ has a Gaussian characteristic function with the same covariance matrix as $\rho_{s}^M$, but with displacement vector increased by the amount $\bm{a}$ referring to the measurement outcome.
By putting the pieces together, we have that, for a Gaussian measurement performed on a Gaussian state $\rho$ with displacement vector $\bm{d}$ and CM $\bm{\sigma}$, the outcome probability distribution takes the (also Gaussian) form
\begin{align}\label{eq:prob_comp}
P(\bm{a}|\rho;s) &= \tr [\rho {M}_{\bm{a}|{s}}] \nonumber\\
&= \frac{e^{-\frac{1}{2}{(\bm{d}_{s}^{M}+\bm{a}-\bm{d})^\intercal}{({\bm{\sigma}_{s}^M}+\bm{\sigma})}^{-1}{(\bm{d}_{s}^{M}+\bm{a}-\bm{d})}}}{(2\pi)^{m}\sqrt{{\rm det} ({\bm{\sigma}_{s}^M}+\bm{\sigma})}},
\end{align}
where ${\bm d}_{s}^M$ and ${\bm \sigma}_{s}^M$ are respectively the displacement vector and the covariance matrix of  ${\rho}_{s}^{M}$, i.e., the Gaussian state of the measurement. The above is the most general form of the probability distribution of the outcome of a Gaussian measurement on a Gaussian state; however, in what follows we shall assume for simplicity that both $\rho$ and $\rho_{s}^M$ have vanishing displacements, so that
\begin{align}\label{gdistrib}
P(\bm{a}|\rho;s) = \frac{1}{(2\pi)^{m}\sqrt{{\rm det} ({\bm{\sigma}_{s}^M}+\bm{\sigma})}} e^{-\frac{1}{2}\bm{a}^\intercal{({\bm{\sigma}_{s}^M}+\bm{\sigma})}^{-1} \bm{a}}.
\end{align} 

The normalisation of this probability can be checked easily
\begin{align}
\int d^{2m}\bm{a} P(\bm{a}|\rho;s)  & = \frac{\int d^{2m}\bm{a} e^{-\frac{1}{2}\bm{a}^\intercal{({\bm{\sigma}_{s}^M}+\bm{\sigma})}^{-1} \bm{a}}}{(2\pi)^{m}\sqrt{{\rm det} ({\bm{\sigma}_{s}^M}+\bm{\sigma})}}\nonumber\\
& = \frac{\sqrt{\frac{(2\pi)^{2m}}{{\rm det} ({\bm{\sigma}_{s}^M}+\bm{\sigma})^{-1}}} }{(2\pi)^{m}\sqrt{{\rm det} ({\bm{\sigma}_{s}^M}+\bm{\sigma})}}= 1.
\end{align}
See the appendix for a generalisation of Gaussian measurements that incorporates noise and auxiliary modes.

\section{Thermometry of Gaussian quantum systems: Optimal vs feasible measurements}
As already advanced, we are interested in the measurement part of thermometry protocols. Let us first recall some basics of metrology in quantum systems. Given a quantum system we aim to estimate a parameter like temperature $T$ imprinted on its density matrix denoted by $\rho(T)$. To this end, one performs a POVM measurement with positive elements $\{M_{{\bm a}|{s}}\}$, where $s$ labels the specific measurement and ${\bm a}$ the different outcomes for that specific measurement. We have $\int d{\bm a} M_{{\bm a}|{s}} = I$ for all $s$, with $I$ being the identity operator. 
The outcomes of the measurement are then mapped into an estimate of the true parameter by using an estimator function $T_{\rm est}({\bm a})$. One can quantify the error on such an estimation process by means of the mean square error $\delta^{2} T_{\rm est}(\rho;s) \coloneqq \int d{\bm a} P(\bm{a}|\rho;s)(T_{\rm est}({\bm a}) - T)^2$, with $P(\bm{a}|\rho;s)=\tr [M_{{\bm a}|{s}}\rho(T)]$ being the probability of observing the outcome ${\bm a}$ conditioned on measuring $s$. Notice that the estimation error depends on the performed measurement.  
It is well known that for any unbiased estimator, that is, for any estimator satisfying $\int d{\bm a} P(\bm{a}|\rho;s) T_{\rm est}({\bm a}) = T$, the mean square error is lower bounded by the Cram\'{e}r-Rao bound
\begin{align}\label{eq:CRB}
    \delta^{2} T_{\rm est}(\rho;s)\geqslant \frac{1}{m {\cal F}^{\rm C}(\rho;s)},
\end{align}
with ${\cal F}^{\rm C}(\rho;s)$ being the (classical) Fisher information given by
\begin{align}\label{eq:CFI_general}
    {\cal F}^{\rm C}(\rho;s) & \coloneqq \average{\left[\partial_{T}\log P(\bm{a}|\rho;s) \right]^2} \nonumber\\
    &= \int d{\bm a} P(\bm{a}|\rho;s)  \left[\partial_{T}\log P(\bm{a}|\rho;s) \right]^2.
\end{align}
The Cram\'{e}r-Rao bound can be saturated in the asymptotic limit by choosing the maximum likelihood estimator~\cite{newey1994large}. %\tcr{[LL: missing reference here!]}
A good candidate to quantify the precision of measurements is therefore given by the Fisher information; the larger, the more precise the measurement. 

It is a natural and fundamental question to ask: what is the ultimate bound on precision? This is obtained by finding the measurement that minimises Eq.~\eqref{eq:CRB}. The corresponding precision is quantified by the so-called quantum Fisher information (QFI), which can be formally defined as 
\begin{align}
    {\cal F}^{\rm Q}(\rho) \coloneqq \max_{s}{\cal F}^{\rm C}(\rho;s) = \tr [\rho \Lambda^2].
\end{align}
Here, $\Lambda$ is the symmetric logarithmic derivative (SLD), a Hermitian operator defined by
\begin{align}
    2\partial_T \rho(T) = \Lambda \rho(T) + \rho(T) \Lambda.
\end{align}
The SLD also characterizes the optimal measurement, which turns out to be projective in the basis of $\Lambda$.
\subsection{Ultimate bounds on thermometry of Gaussian systems}
The problem of finding the ultimate bounds in parameter estimation of Gaussian quantum systems was first addressed in~\cite{monras2013phase,PhysRevA.88.040102}, further explored in~\cite{PhysRevA.89.032128}, and extended to multiple parameter estimation in~\cite{PhysRevA.98.012114}. It was shown that the optimal measurement can be always written as a second order operator in terms of the quadratures, that is
\begin{align}\label{eq:SLD}
    \Lambda = \sum_{kl}{\bm C}^{(2)}_{kl}R_kR_l + \sum_{k}{\bm C}^{(1)}_{k}R_k + {\bm C}^{(0)},
\end{align}
where ${\bm C}^{(2)}$ is a $2N\times 2N$ matrix of real numbers, ${\bm C}^{(1)}$ is a $2N$ dimensional real vector, and ${\bm C}^{(0)}$ is a real number. These three quantities are given by the solution of the following equations:
\begin{align}
    \partial_{T}{\bm \sigma} & = 2{\bm \sigma}{\bm C}^{(2)}{\bm \sigma} + \frac{1}{2}{\bm \Omega}{\bm C}^{(2)}{\bm \Omega},\\
    {\bm C}^{(1)} 
    &= {\bm \sigma}^{-1}\partial_{T}{\bm d} - 2{\bm C}^{(2)}{\bm d},\\
    {\bm C}^{(0)}
    &= -{\bm C}^{(1)T}{\bm d} - \frac{1}{2}\Tr [{\bm C}^{(2)}{\bm \sigma}] - {\bm d}^\intercal{\bm C}^{(2)}{\bm d}.
\end{align}
In turn, the quantum Fisher information reads
\begin{align}
    \hspace{-2mm}{\cal F}^{\rm Q}({\bm \sigma},{\bm d})  & = \partial_{T} {\bm d}^{T}{\bm \sigma}^{-1}\partial_{T}{\bm d} \nonumber\\
    &+ 2\Tr [{\bm C}^{(2)}{\bm \sigma}{\bm C}^{(2)}{\bm \sigma} + \frac{1}{4}{\bm C}^{(2)}{\bm \Omega}{\bm C}^{(2)} {\bm \Omega}].
\end{align}
Despite its elegant and simple form, the optimal measurement~\eqref{eq:SLD} is, unfortunately, often an experimentally challenging one to perform. Thus, we will explore more feasible measurements in the next sections. It is worth mentioning that the problem of phase estimation by means of Gaussian measurements has been addressed in~\cite{PhysRevA.73.033821,oh2019optimal} for a single mode, while using Gaussian measurements in the Bayesian approach to metrology was considered in~\cite{Morelli_2021, kundu2021machine}.

\subsection{Thermometry using Gaussian measurements}
\label{measu}
Gaussian measurements are an important family of experimentally realisable measurements. They have been defined formally in section~\ref{sec:Gaussian_states_measure}; alternatively, they can be characterized as all those measurements that are realizable by appending ancillas in the vacuum state, applying arbitrary Gaussian unitaries, and performing homodyne detections~\cite{nogo3}. %As we defined them officially in
We can characterize them by means of their covariance matrix ${\bm \sigma}_{s}^M$. Our first observation is that the optimal measurement should have a covariance matrix representing a pure quantum state.

\subsubsection{Observation.---Pure Gaussian measurements are always better}
To prove this claim, we show that any non-pure Gaussian measurement can be cast as classical post-processing of another pure Gaussian measurement, therefore, not increasing its (classical Fisher) information. 

The POVM elements of a Gaussian measurement can be written in the form
\begin{equation} \label{eq:Gaussian_Child}
	{M}_{\bm{a}|{s}}=\int\frac{\bdm{y}}{(2\pi)^{2m}}\,
	D\big(\bm{y}\big)e^{{-\bm{y}^\intercal{\bm{\sigma}_{s}^M}\bm{y}/2-i \bm{a}^\intercal \bm{\Omega}\bm{y}}}.
\end{equation}
If the measurement is not pure, then there exists a pure covariance matrix $\bm{\sigma}_{t}^{M}$ s.t.\ ${\bm{\sigma}_{s}^M} - \bm{\sigma}_{t}^{M}\geqslant 0$. This follows directly from Williamson's theorem~\cite{willy} (see also~\cite[Lemma~5]{LL-log-det} for a direct proof). The POVM measurement elements corresponding to this pure CM are
\begin{equation} \label{eq:Gaussian_parent}
	{M}_{\bm{b}|t}=\int\frac{\bdm{y}}{(2\pi)^{2m}}\,
	D\big(\bm{y}\big)e^{{-\bm{y}^\intercal\bm{\sigma}_{t}^{M}\bm{y}/2-i \bm{b}^\intercal \bm{\Omega}\bm{y}}}.
\end{equation}
If ${M}_{\bm{a}|{s}}$ can be reproduced by ${M}_{\bm{b}|t}$, then we should find the probability measure $\pi({\bm a}|{\bm b})$ s.t., 
\begin{align} \label{eq:Gaussian_postprocess}
	{M}_{\bm{a}|{s}} = \int \frac{\bdm{b}}{(2\pi)^{m}} \pi({\bm a}|{\bm b}) {M}_{\bm{b}|t} ~~\forall {\bm a}.
\end{align}
By combining \eqref{eq:Gaussian_parent} and \eqref{eq:Gaussian_postprocess} we obtain
\begin{align}
	{M}_{\bm{a}|{s}} & = \iint \frac{\bdm{y}}{(2\pi)^{2m}} \frac{\bdm{b}}{(2\pi)^{m}} \pi({\bm a}|{\bm b}) \,
	D\big(\bm{y}\big)\nonumber\\
	&~~\times e^{{-\bm{y}^\intercal\bm{\sigma}_{t}^{M}\bm{y}/2-i \bm{b}^\intercal \bm{\Omega}\bm{y}}}.
\end{align}
If we compare with \eqref{eq:Gaussian_Child}, we should have
\begin{align}
	\int \frac{\bdm{b}}{(2\pi)^{m}} \pi({\bm a}|{\bm b}) e^{-i \bm{b}^\intercal \bm{\Omega}\bm{y}} = e^{{-\bm{y}^\intercal({\bm{\sigma}_{s}^M}-\bm{\sigma}_{t}^{M})\bm{y}/2-i \bm{a}^\intercal \bm{\Omega}\bm{y}}}.
\end{align}
Finally by taking the inverse Fourier transform we find
\begin{align}
	\pi({\bm a}|{\bm b}) & = \frac{e^{\frac{1}{2}({\bm b}-{\bm a})^\intercal{\bm \Omega}({\bm{\sigma}_{s}^M}-\bm{\sigma}_{t}^{M})^{-1}{\bm \Omega}^\intercal({\bm b}-{\bm a})}}{\sqrt{\det ({\bm{\sigma}_{s}^M}-\bm{\sigma}_{t}^{M})}}.
\end{align}
Note that the fact that the Fourier transform exists owes to the fact that ${\bm{\sigma}_{s}^M}-\bm{\sigma}_{t}^{M} \geqslant 0$.

In this manner, we have proven that the optimal Gaussian measurements should be chosen from the pool of pure covariance matrices, which can always be written as ${\bm \sigma}_{s}^M = {\bm S}^M ({\bm S}^M)^\intercal$, with ${\bm S}^M$ being a symplectic transformation.
\subsubsection{Classical Fisher information of Gaussian measurements}
As we saw in section~\ref{sec:Gaussian_states_measure}, when performing a Gaussian measurement on Gaussian systems the resulting probability distribution, given by~\eqref{eq:prob_comp}, is also Gaussian. It %This Gaussian distribution
is charachterized by its displacement vector ${\bm d}-{\bm d}_s^M$ and covariance matrix ${\bm V}\coloneqq {\bm \sigma}_{s}^M+{\bm \sigma}$, with (${\bm d}$, ${\bm \sigma}$) and (${\bm d}_s^M$, ${\bm \sigma}_s^M$) being the (displacement vector, covariance matrix) of the system and the measurement, respectively. 
The Fisher information of classical Gaussian distributions is well known and reads
\begin{align}\label{eq:CFI_Gaussian_main}
    {\cal F}^{\rm C}({\bm d},{\bm \sigma};{\bm \sigma}_s^{M}) & = \partial_{T} {\bm d}^{T}({\bm \sigma} + {\bm \sigma}_s^{M})^{-1}\partial_{T}{\bm d}  \nonumber\\
    &~~+ \frac{1}{2}\Tr \left[\left(({\bm \sigma} + {\bm \sigma}_s^{M})^{-1} \partial_{T}{\bm \sigma}\right)^2 \right],
\end{align}
where we explicitly include in the argument that the classical Fisher information is obtained for measurement ${\bm \sigma}_s^{M}$ conditioned on the covariance matrix of the system being ${\bm\sigma}$. 
In what follows we focus on systems with zero first-order moments, and use the convention ${\cal F}^{\rm C}({\bm \sigma};{\bm \sigma}_s^{M}) \equiv {\cal F}^{\rm C}({\bm 0},{\bm \sigma};{\bm \sigma}_s^{M})$. 
%\tcr{[LL: I guess you mean vanishing first moments, i.e.\ vanishing expectation value of all the operators $x_j,p_k$?]} 
This is motivated by the fact that, at thermal equilibrium, one can always change the basis with a temperature-independent transformation to one with zero displacements. Moreover, in most non-thermal scenarios the quadratic nature of the system-sample Hamiltonian is such that the displacement vector of the system vanishes~\cite{PhysRevLett.122.030403,PhysRevA.96.062103,planella2020bath,Lampo2017bosepolaronas,khan2021sub}.
The proof of Eq.~\eqref{eq:CFI_Gaussian_main} can be found, e.g., in Refs.~\cite{monras2013phase,malago2015information}; in the appendix we also present a simple proof for the case with zero displacement.

\subsubsection{Homodyne detection and equivalence with the error propagation}
Homodyne detection is a projective measurement in the basis of one of the quadratures. For example, consider a single-mode Gaussian system, and assume that we want to measure it in the position basis. The measurement covariance matrix can be written as ${\bm \sigma}_{R_{1}}^{M} \coloneqq \lim_{r\to\infty} {\rm diag} [1/r, r]$. By replacing in \eqref{eq:CFI_Gaussian_main} one finds the classical Fisher Information for the position measurement
\begin{align}\label{eq:CFI_HD}
    {\cal F}^{\rm C}({\bm \sigma};{\bm \sigma}_{R_{1}}^{M}) 
    & =\hspace{-1mm} \lim_{r\to\infty} \frac{1}{2} \Tr \left[\left(({\bm \sigma} +  {\rm diag}[1/r, r])^{-1} \partial_{T}{\bm \sigma}\right)^2 \right]\nonumber\\
    &= \frac{|\partial_{T}{\bm \sigma}_{11}|^2}{2{\bm \sigma}_{11}^2}.
\end{align}
%Many
Often times, the \textit{inverse} of the error propagation is an alternative way of characterising the precision. The error propagation characterizes the estimation error when using some observable $O$ and is defined as
\begin{align}\label{eq:error_prop}
    \delta^2 T (\rho;O)_{\rm EP} \coloneqq \frac{{\rm Var}(O)}{|\partial_{T}\average{O}|^2}.
\end{align}
Regarding position measurement, if we replace the observable with the square of the position operator, i.e., $O\to R_1^2$, then by using Wick's theorem and Eqs.~\eqref{eq:CFI_HD}  and \eqref{eq:error_prop} one finds that
\begin{align}
    \delta^2 T (\rho;R_1^2)_{\rm EP} = \frac{1}{{\cal F}^{\rm C}({\bm \sigma};{\bm \sigma}_{R_{1}}^{M})}.
\end{align}
This equivalence between the error propagation of the squared quadratures and the inverse of classical Fisher information holds for any parameter estimation problem.
Needless to say that, in case we were to measure the momentum, we just need to replace $R_1\to R_2$ and ${\bm \sigma}_{11}\to {\bm \sigma}_{22}$. Note that this result holds regardless of the basis, and in particular for rotated quadrature measurements as well; one just needs to rewrite the covariance matrix of the system ${\bm \sigma}$ in the rotated basis. The equivalence between the error propagation and the inverse of the classical Fisher information does not necessarily hold in multimode scenario. Nonetheless, for the trivial case with a tensor product state in a basis that does not depend on the parameter---where the classical Fisher information becomes additive---the same result holds. That is to say if the system covariance matrix satisfies ${\bm \sigma}=\oplus{\bm \sigma}^k$, and $\partial_T{\bm \sigma}=\oplus_k\partial_T{\bm \sigma}^k$, then any local measurement in the form of ${\bm \sigma}_{s}^M = \oplus_k {\bm \sigma}_{\rm k,HD}^{M}$---where ${\bm \sigma}_{\rm k,HD}^M$ is an arbitrary Homodyne detection on the $k$th mode---has a classical Fisher information that is equivalent to the inverse of the error propagation for $O=\otimes_k \left(R_{\rm k,HD}\right)^2$---where $R_{\rm k,HD}$ is the local quadrature in the Hilbert space of the $k$th mode corresponding to ${\bm \sigma}_{\rm k,HD}^M$. 

Before proceeding to the next section, let us point out that often performing a perfect homodyne detection requires infinite energy. In Appendix \ref{App:HD} we briefly address such a scenario. In particular, we show that there will be a bias in our temperature estimate, which vanishes linearly as one increases the energy in the local oscillator.

\subsubsection{Heterodyne detection} 
A Heterodyne measurement can be represented---in the single-mode case---by the covariance matrix ${\bm \sigma}_{\rm Het}^M = {\rm diag}[1, 1]$, which remains the same under local rotation of the quadratures. One can find the corresponding classical Fisher information by replacing this in Eq.~\eqref{eq:CFI_Gaussian_main}. If the system's covariance matrix and its derivative can be diagonalised in the same basis, i.e., ${\bm \sigma} = {\rm diag}[{\bm \sigma}_{11}, {\bm \sigma}_{22}]$, and $\partial_T{\bm \sigma} = {\rm diag}[\partial_T{\bm \sigma}_{11}, \partial_T{\bm \sigma}_{22}]$ then the classical Fisher information gets a simple form 
\begin{align}
    \hspace{-2mm}{\cal F}^{\rm C}({\bm \sigma};{\bm \sigma}_{\rm Het}^M) = \frac{1}{2}\left(
    \left[\frac{\partial_T{\bm \sigma}_{11}}{1+{\bm \sigma}_{11}}\right]^2 + \left[\frac{\partial_T{\bm \sigma}_{22}}{1+{\bm \sigma}_{22}}\right]^2
    \right).
\end{align}
If the system is at thermal equilibrium, its covariance matrix is diagonal in a basis that is independent of temperature, and therefore the above equation is useful. Otherwise, if one deals with non-thermal states (such as, e.g., steady states that are non-thermal\cite{PhysRevLett.122.030403,PhysRevA.96.062103,planella2020bath,Lampo2017bosepolaronas,khan2021sub} or dynamical non-thermal states~\cite{PhysRevLett.114.220405}), one should avoid using this, and directly use Eq.~\eqref{eq:CFI_Gaussian_main}.
\subsubsection{The optimal Gaussian measurement}
Given a covariance matrix and its derivative, it is a natural question to find the Gaussian measurement with the maximum Fisher information
\begin{align}
    \hspace{-2mm}{\bm \sigma}_{\rm max}^M & \coloneqq {\rm arg}\max_{{\bm \sigma}_s^{M}}
    \frac{1}{2}\Tr \left[\left(({\bm \sigma} + {\bm \sigma}_s^{M})^{-1} \partial_{T}{\bm \sigma}\right)^2 \right],\\
    \hspace{-2mm}{\bm \sigma}_{\rm min}^M & \coloneqq {\rm arg}\min_{{\bm \sigma}_s^{M},{\rm pure}}
    \frac{1}{2}\Tr \left[\left(({\bm \sigma} + {\bm \sigma}_s^{M})^{-1} \partial_{T}{\bm \sigma}\right)^2 \right].
    % \\
    % {\cal F}^{\rm C}({\bm \sigma};{\bm\sigma}_{\rm max}) & \coloneqq \max_{{\bm \sigma}_s^{M}}
    % \frac{1}{2}\Tr \left[\left(({\bm \sigma} + {\bm \sigma}_s^{M})^{-1} \partial_{T}{\bm \sigma}\right)^2 \right],\\
\end{align}
As already advanced, ${\bm \sigma}_{\rm max}^M$ should represent a pure state. We also defined ${\bm \sigma}_{\rm min}^M$ as the covariance matrix of the \textit{pure} measurement with minimum Fisher information.
If the covariance matrix of the system and its derivative are proportional to identity (e.g., if we have a thermal state), then it is easy to find the optimal Gaussian measurement in the single-mode case. Let ${\bm \sigma} = \nu {\bm I}_2$, and $\partial_T{\bm \sigma}={\nu}^{\prime}{\bm I}_2$. Then, we also note that the measurement covariance matrix can be cast as 
${\bm \sigma}_{r}^M={\bm O}{\bm r}^M{\bm O}^{T}$, with ${\bm O}$ being an orthogonal transformation, and ${\bm r}^M={\rm diag}[{ r}, 1/{r}]$ with the squeezing parameter $r\in [0,\infty)$. The classical Fisher information does not depend on the orthogonal rotation ${\bm O}$, and reads
\begin{align}
    {\cal F}^{\rm C}({\bm \sigma};{\bm \sigma}_{r}^M) = \frac{(\nu^{\prime})^2}{2}\left[
    \frac{1}{(\nu+r)^2}+\frac{1}{(\nu+1/r)^2}
    \right].
\end{align}
Note that this does not change by the transformation $r\to 1/r$, and therefore, we can limit the domain to $r\in [1,\infty)$.
We can find the value of $r$ corresponding to the minimum and maximum of the Fisher information by simple algebra, which is summarised in Table~\ref{table:CFI_Gaussian}. For example, the maximum is given by the expression
\begin{align}
    {\cal F}^{\rm C}({\bm \sigma};{\bm \sigma}_{\max}^M)&=\max_{r\geqslant 1} {\cal F}^{\rm C}({\bm \sigma};{\bm \sigma}_{r}^M) \nonumber\\
    &= {\nu'}^2 \max\left\{ \frac{1}{2\nu^2},\, \frac{1}{(\nu+1)^2}\right\}.
\end{align}
It can be seen that the optimal measurement is either Homodyne or Heterodyne detection. The freedom in the orthogonal rotation ${\bm O}$ means that one can perform these measurements in an arbitrary basis. 

If the state is at thermal equilibrium, then $\nu = \coth(\omega/2T)$, with $\omega$ being the energy of the Bosonic mode and $T$ being the temperature. The fact that the optimal Gaussian measurement depends on temperature is interesting; on the contrary, the optimal non-Gaussian measurement is always performed in the basis of the system Hamiltonian, regardless of the temperature. 
Furthermore, our result shows not only that at high temperatures the Heterodyne detection is the best Gaussian measurement, but also that its performance approaches that of the optimal measurement. %\tcr{[LL: really? Perhaps we want to say that its performance approaches that of the optimal measurement, which is quite a different thing.]} 
To see this, note that the quantum Fisher information of a single Bosonic mode (that is a Harmonic oscillator) at thermal equilibrium is given by~\cite{PhysRevLett.114.220405}
\begin{align}\label{eq:QFI_HO}
    {\cal F}^{\rm Q}({\rm HO}) = \frac{\omega^2}{4T^4\sinh^2(\omega/2T)},
\end{align}
and thus at $T\to\infty$ the leading order behaves as ${\cal F}^{\rm Q}({\rm HO}) = 1/T^2$. 
%\tcr{[LL: I'm afraid this doesn't really make sense; if you take the limit on the LHS then on the RHS there should be no $T$.]}. 
Similarly, one can see that at the same asymptotic limit, the leading order of the Fisher information of Heterodyne measurement behaves as $ {\cal F}^{\rm C}({\bm \sigma};{\bm \sigma}_{\rm max}^M) = {\cal F}^{\rm C}({\bm \sigma};{\bm \sigma}_{\rm Het}^M) = 1/T^2$. %\tcr{[LL: same as before.]}

\begin{table*}[t!]
	\begin{center}
	\begin{tabular}{ |c||c|c|c|c|  }
		\hline
		$\text{Consideration}$  & $r_{\min}$ & ${\cal F}^{\rm C}({\bm \sigma};{\bm \sigma}_{\rm min}^{M})$   & $r_{\max}$ & ${\cal F}^{\rm C}({\bm \sigma};{\bm \sigma}_{\rm max}^{M})$ \\
		\hline
		$1\leqslant\nu\leqslant 2$  & $1$ & $\frac{\nu^{\prime 2}}{(1+\nu)^2}$   & $\infty$ & $\frac{\nu^{\prime 2}}{2\nu^2}$ \\
		\hline
		$2\leqslant\nu\leqslant 1+\sqrt{2}$  & $\frac{\nu(\nu^2-3)}{2}$ & $\frac{\nu^{\prime 2}(\nu^2-2)}{2(\nu^2-1)}$   & $\infty$ & $\frac{\nu^{\prime 2}}{2\nu^2}$ \\
		\hline
		$1+\sqrt{2}\leqslant\nu$  & $\frac{\nu(\nu^2-3)}{2}$ & $\frac{\nu^{\prime 2}(\nu^2-2)}{2(\nu^2-1)}$   & 1 & $\frac{\nu^{\prime 2}}{(\nu+1)^2}$ \\
		\hline
	\end{tabular}
\end{center}
\caption{Maximal and minimal values of the CFI ${\cal F}^{\rm C}({\bm \sigma};{\bm \sigma}_{s}^M)$ for Gaussian measurements in one-mode thermal states parametrized by $\nu$, and the measurements that achieve them with the squeezing parameter $r$. Only pure measurements are considered. The optimal measurement is either Homodyne detection (for the low-temperature regime) or Heterodyne detection (for high temperatures). This is contrary to the optimal [non-Gaussian] measurement, which is a projective measurement in the energy basis regardless of the temperature.}
\label{table:CFI_Gaussian}
\end{table*}

\begin{figure*}
	\centering
	\includegraphics[width=.7\linewidth]{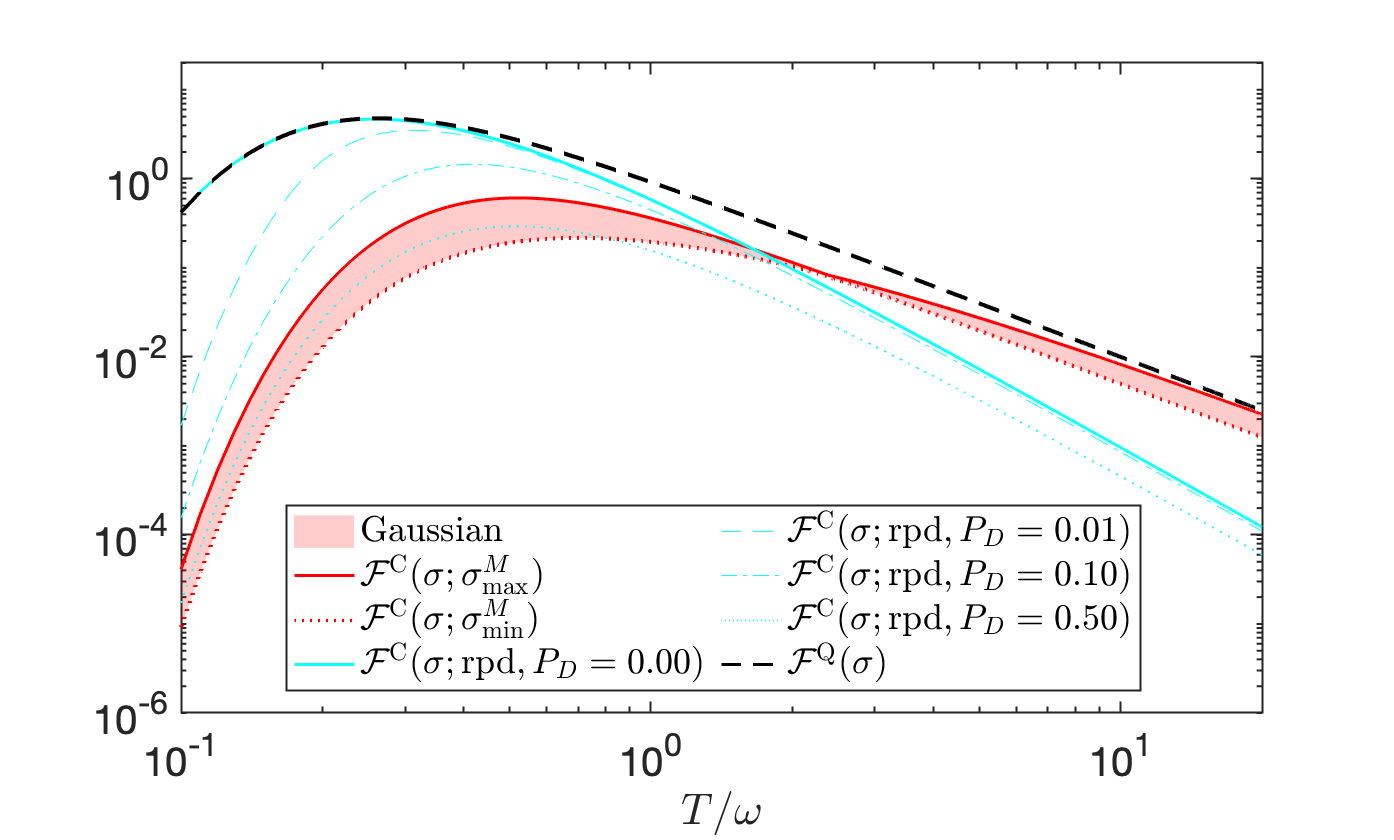}
	\caption{Temperature estimation of single-mode thermal systems at thermal equilibrium. We compare the ultimate limit on thermometry precision posed by the Quantum Fisher Information (dashed black line) and the precision achieved by the optimal Gaussian measurement (solid red), the worst pure Gaussian measurement (dotted red) and single-mode {on/off} photo-detection (blue) versus the temperature $T$. The shaded area represents the precision achievable by the set of single-mode Gaussian measurement. We also show the performance of {on/off} photo-detection including imperfect detectors, characterized by the parameter $P_D$ (dashed and dotted blue lines).}
	\label{fig:QFI_vs_Gaussian_vs_rpd}
\end{figure*}

In Figure~\ref{fig:QFI_vs_Gaussian_vs_rpd} the Fisher information of Gaussian measurements is depicted versus temperature. We compare this to the quantum Fisher information. Unlike in the high-temperature regime, where Gaussian measurements (Heterodyne specifically) are close to optimal, at low temperatures %regime they are extremely bad.
they perform rather poorly.
In Section~\ref{sec:PD} we will see that {on/off} photo-detection measurements are a viable alternative in this latter case, and specifically that they perform close to optimal at low temperatures.

\subsubsection{Optimal Gaussian measurements on multimode systems}
When dealing with a multimode scenario, the problem becomes harder, and we were unable to derive an analytical solution. The main difficulty arises from the fact that the objective function to be optimised, i.e., the Fisher information, is a nonlinear function of the measurement covariance matrix. 
Nonetheless, one can design an algorithm to find the optimal measurement numerically. To begin with, one can use the fact that the covariance matrix of the optimal measurement is always pure %---in all modes---
and therefore belongs to the family of symmetric symplectic transformations, i.e.,  ${\bm \sigma}_{\rm max}^M \in \{\bar{\bm S}^M\coloneqq{\bm S}^M({\bm S}^M)^\intercal|{\bm S}^M{\bm \Omega}({\bm S}^M)^\intercal = {\bm \Omega}\}$, where ${\bar{\bm S}^M}$ is a symmetric symplectic transformation, by definition.
We can then run an optimisation program, e.g., ``fmincon'' on MATLAB programing language, to maximise ${\cal F}^{\rm C}({\bm \sigma};{\bm S}^M({\bm S}^M)^\intercal)$---alternatively to minimise -${\cal F}^{\rm C}({\bm \sigma};{\bm S}^M({\bm S}^M)^\intercal)$---subject to the constraint that ${\bm S}^M{\bm \Omega}({\bm S}^M)^\intercal = {\bm \Omega}$. See the appendix for an example code.

\subsubsection{Multimode Gaussian states at thermal equilibrium: Local versus global Gaussian measurements}
\begin{figure*}[t!]
    \centering
    \includegraphics[width=.8\linewidth]{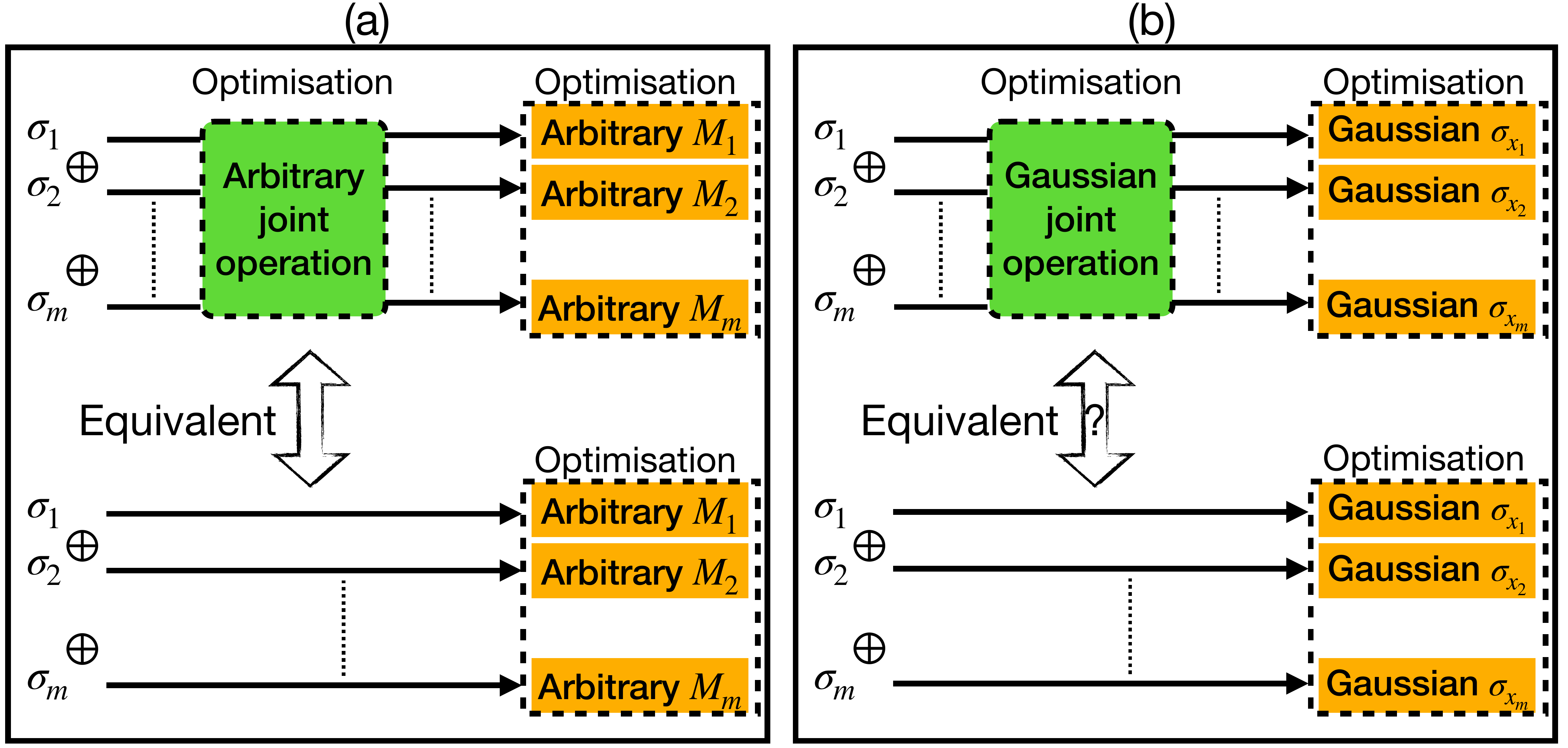}
    \caption{The role of global measurements when the system is in a product state i.e., the covariance matrix is a direct sum. (a) The system undergoes an arbitrary [physical] joint operation then each mode is locally measured. We have no restriction in the performed local measurements. In this case one can show that the optimal scenario is a  local one. That is, we do not need to perform any joint operation; optimisation over arbitrary local measurements is sufficient. This is so because of the additivity of the quantum Fisher information. (b) Same as (a), but when the joint operation and the local measurements are all Gaussian. Unlike (a), if we are restricted to perform only Gaussian measurements, it is not clear whether joint Gaussian operations are beneficial or not. We conjecture that at thermal equilibrium they are not.}
    \label{fig:Global_vs_Local}
\end{figure*}
If the system is at thermal equilibrium, one can always diagonalise its covariance matrix by means of some temperature-independent symplectic transformation---i.e., to write it in the normal-mode basis. The symplectic transformation is a Gaussian process, and can be considered as part of the Gaussian measurement which we will optimise anyway. Therefore, we can focus only on initial thermal states with an uncorrelated covariance matrix ${\bm\sigma} = \oplus_k{\nu_k}{\bm I}_2$. 
As for the measurement, one can first perform a joint Gaussian symplectic transformation on all modes---s.t., ${\bm \sigma}\to {\bm S}_s^M{\bm \sigma} ({\bm S}_s^M)^\intercal$---and then perform a local Gaussian measurement ${\bm \sigma}_{s}^{M} = {\oplus}_k{\bm \sigma}_{s_k}^M$. The idea is to find the optimal ${\bm S}_s^M$ and ${\bm\sigma}_{s_k}^M~\forall k$.

Let us start by a scenario that only allows for local Gaussian measurements, i.e., ${\bm S}^M_s={\bm I}_{2m}$. In this case, one can easily check that the classical Fisher information is additive, that is
\begin{align}
    {\cal F}^{\rm C}(\oplus_k {\bm \sigma}_{k}; \oplus_k{\bm\sigma}_{s_k}^M) = \sum_k {\cal F}^{\rm C}( {\bm \sigma}_{k}; {\bm\sigma}_{s_k}^M).
\end{align}
Therefore, when we optimise over local Gaussian measurements, we have the optimal Fisher information $${\cal F}^{\rm C}(\oplus_k {\bm \sigma}_{k}; \oplus_k{\bm\sigma}_{k,\max}^M) = \sum_k {\cal F}^{\rm C}({\bm \sigma}_{k}; {\bm \sigma}_{k,\max}^M),$$ that is the sum of optimal Fisher information of individual modes. The question that remains to be answered is: can the performance of Gaussian measurements get better if we allow for joint measurements, i.e., if ${\bm S}_s^M \neq {\bm I}_{2m}$?

As depicted in Fig.~\ref{fig:Global_vs_Local}, the global Gaussian measurements can be seen as the operation of some joint Gaussian unitary on all modes---namely an $m$-mode symplectic transformation---followed by local measurements. The question is then whether the joint Gaussian operation can be beneficial. Mathematically, we know that ${\cal F}^{\rm C}(\oplus_k {\bm \sigma}_{k}; {\bm \sigma}^M_{\max}) \geqslant {\cal F}^{\rm C}(\oplus_k {\bm \sigma}_{k}; \oplus_k{\bm \sigma}_{k,\max}^M)$, since the local measurements are a subset of global ones. We now wonder whether this inequality is in fact an equality, i.e., if 
\begin{align}\label{eq:equivalence_conj}
    {\cal F}^{\rm C}(\oplus_k {\bm \sigma}_{k}; {\bm \sigma}^M_{\max})  \overset{?}{=} {\cal F}^{\rm C}(\oplus_k {\bm \sigma}_{k}; \oplus_k{\bm \sigma}_{k,\max}^M).
\end{align}
Note that, if we were not restricted to Gaussian measurements and could perform arbitrary measurements, then the equality would always hold. This is so because in this case the optimal Fisher information would reduce to the quantum Fisher information, which is additive and can be achieved by locally optimal measurements, that is 
\begin{align}
    {\cal F}^{\rm C}(\oplus_k {\bm \sigma}_{k}; M_{\max}) &= {\cal F}^{\rm C}(\oplus_k {\bm \sigma}_{k}; \oplus_k M_{k,\max}) \nonumber\\&= {\cal F}^{\rm Q}(\oplus_k {\bm \sigma}_{k}) = \sum_k {\cal F}^{\rm Q}( {\bm \sigma}_{k}),
\end{align}
with $M_{\max}$ being the optimal joint measurement---that is not necessarily Gaussian---performed on all modes, while $M_{k,\max}$ is the optimal local measurement---again not necessarily Gaussian---performed on mode $k$.

Unfortunately, we cannot prove the equivalence Eq.~\eqref{eq:equivalence_conj} in the general case. However, in the appendix we prove it analytically when all modes have identical covariance matrices---i.e., when ${\bm \sigma_k} = {\bm \sigma}_j \forall j,k$. Furthermore, our numerical simulations support the equivalence for arbitrary product states. Based on these observations we conjecture this is generally true, however, a rigorous proof is missing currently.

%%%%%%%%%%%%%%%%
\subsection{{On/Off} Photo-Detection}\label{sec:PD}
Another type of measurement that will be shown to be relevant for thermometry and which can be of interest for experimental implementation is {on/off photo-detection}. This is a 2-outcome (detected or not detected) measurement and therefore it is not Gaussian. However, a link can be made between {this type of} photon-detection's POVM elements and the POVM elements of a Gaussian measurement, characterized in Section \ref{2}.
In what follows we consider realistic photo-detection, in which we allow for a probability $P_D$ of random counts---see e.g., \cite{PhysRevX.6.021039,BARNETT199845} for the error sources contributing to the random counts.
For a single-mode system the POVM elements can then be written as
\begin{align}\label{povmelements}
{M}_{0| \rm{rpd}} & = (1-P_D)\ket{0}\bra{0}, \nonumber \\
{M}_{{\bar 0}| {\rm rpd}} & = I -   {M}_{0| \rm{rpd}},
\end{align}
where $ \ket{0}\bra{0}$ is merely the vacuum state, and $I$ is the identity operator in the Hilbert space of the quantum system (not to be mistaken with ${\bm I}_2$). The vacuum itself is a pure Gaussian state with a CM equal to identity ${\bm \sigma}_{\rm vac}^M={\bm I}_2$ and null displacement vector. The probability of finding any Gaussian system---with density matrix $\rho$, covariance matrix ${\bm \sigma}$ and a vanishing displacement vector---in vacuum is then given by
\begin{align}\label{eq:P_0_PD}
    P(0|{\bm \sigma};{\rm rpd})  &= \Tr [\rho {M}_{0| {\rm rpd}}] = (1-P_D)\Tr [\rho \ket{0}\bra{0}]
     \nonumber\\
     &= \frac{1-P_D}{\sqrt{{\rm det} (\bm{I}_{\bm{2}}+\bm{\sigma})}},
\end{align}
where we use Eq.~\eqref{eq:Gaussian_Multid_int}. Of course we also find the probablity of detection to be $P({\bar 0}| {\rm rpd};{\bm \sigma}) = 1 -  P(0| {\rm rpd};{\bm \sigma})$. The Classical Fisher Information associated to this probability distribution can be simply found from \eqref{eq:CFI_general} and reads
\begin{align}\label{cfipd}
    {\cal F}^{\rm C}( {\bm \sigma};{\rm rpd})& = \frac{[\partial_{T} P(0|{\bm \sigma}; {\rm rpd})]^2}{P(0|{\bm \sigma}; {\rm rpd})P({\bar 0}|{\bm \sigma}; {\rm rpd})},
\end{align}
As an example, we can use {on/off} photo-detection for thermometry of a single mode at thermal equilibrium, with ${\bm \sigma}=\coth(\omega/2T){\bm I}_2$.
Using \eqref{eq:P_0_PD} one finds $P(0|{\bm \sigma}; {\rm rpd})=(1-P_D)P_0$ with $P_0\coloneqq (1-\exp[-\omega/T])^{-1}$. Thus the Fisher information is given by
\begin{align}
    {\cal F}^{\rm C}({\bm \sigma};  {\rm rpd})= \frac{(1-P_D)[\partial_T P_0]^2}{P_0\left[1-(1-P_D)P_0\right]}.
\end{align}
In Fig.~\ref{fig:QFI_vs_Gaussian_vs_rpd} we plot the classical Fisher information of realistic  {on/off} photo-detection for various $P_D$, and benchmark it against the optimal measurement and the optimal Gaussian measurement. Ideal  {on/off} photo-detection performs close to optimal specifically at low temperatures. 
%In particular, ideal photo-detection (i.e., $P_D=0$) is in fact optimal \added{IS IT? FOR WHICH RANGE? OR ONLY IN THE LIMIT OF ZERO TEMPERATURE?}. 
However, as soon as detection error is introduced and $P_D 
> 0$, 
the optimality at low temperatures is lost. The intuition behind  {this type of measurement} being optimal is as follows: Firstly, we note that at thermal equilibrium the optimal measurement is in the basis of Hamiltonian, i.e., projection in the number basis. Secondly, at very low $T$ the system is basically frozen in its lowest energy eigenstates, namely it is in the vacuum state with some probability of populating the first excited state. As such, the projection in the number basis can be well approximated by ideal  {on/off} photo-detection. 
Since the probability of observing the system in an excited state is already very low, then any random count will hugely impact our statistics and therefore realistic  {on/off} photo-detection will eventually fail to estimate low enough temperatures. This argument can be made rigorous mathematically as well.
At low temperature, one can see
\begin{align}
    {\cal F}^{\rm C}( {\bm \sigma};{\rm rpd})&\approx\begin{cases}
        \frac{\omega^2 e^{-\omega/T}}{T^4},&\hspace{.5cm} P_D=0\\
        \frac{(1-P_D)\omega^2 e^{-2\omega/T}}{P_DT^4},&\hspace{.5cm} P_D\neq 0
    \end{cases}
\end{align}
Note that the exponential term in the case $P_D\neq 0$ vanishes more rapidly, which is why realistic {on/off} photo-detection performs far from optimal for small enough $T$. Moreover, by using Eq.~\eqref{eq:QFI_HO} one can confirm that the leading order of quantum Fisher information at $T\to 0$ has the same value as the idealized detection, i.e., to the leading order $ {\cal F}^{\rm Q}(\bm \sigma) \approx \omega^2 \exp[-\omega/T]/T^4$.
%%%%%%%%%%%%%%%
\subsubsection{Multimode photo-detection}
For general \textit{m}-mode systems, we can perform  {on/off} photo-detection individually on each mode. The $2^{\textit{m}}$ POVM elements can then be written as tensor product $M_{{\bm a}|{\rm rpd}}={M}_{{\bm a}_1|{\rm rpd}}^1\otimes {M}_{{\bm a}_2|{\rm rpd}}^2 \otimes ... \otimes {M}_{{\bm a}_\textit{M}|{\rm rpd}}^{\textit{m}}$ where ${\bm a}_{i} \in \{0,{\bar 0}\} \forall i $. 
In what follows, we use ${\bm 0}$ to denote the vector of all zeros, ${{\bm 0}}_k$ to denote the vector of all zeros except the $k$th element being ${\bar 0}$, ${{\bm 0}}_{k,l}$ to denote the vector of all zeros except elements $k$ and $l$ which are ${\bar 0}$ and so on.
Given an $m$-mode Gaussian system $\rho$ with the covariance matrix ${\bm \sigma}$, the probability of finding, for example, all modes in the vacuum state is (recall that all first moments of the system vanish)
%\tcr{[LL: I guess you are assuming that the displacement vector of the Gaussian state is $0$, right? Because otherwise that vector has to enter the formula below...]}
\begin{align}\label{p0}
    P({\bm 0}|{\rm rpd};{\bm \sigma}) = \Tr [\rho {M}_{{\bm 0}|{\rm rpd}}]
    & =  \frac{\Pi_{i} (1-P_D^i)}{\sqrt{{\rm det} (\bm{I}_{{2m}}+\bm{\sigma})}},
\end{align}
where $P_D^i$ is the probability of random count for mode $i$, and $\bm{I}_{{2m}}$ is the $2m \times 2m$ identity matrix. Similarly, by using $M_{{\bm 0}_k|{\rm rpd}} = M_{0|{\rm rpd}}^1\dots\otimes (I-M_{0|{\rm rpd}}^k)\dots \otimes M_{0|{\rm rpd}}^m$, the probability of finding the $k$th mode in an excited state while all the other modes at vacuum reads 
% We can also compute the probability of finding the system in an excited state. Lets say mode $1 \leqslant k \leqslant M$ is in an excited state while all other modes are in the vacuum state. Then the associated POVM element will be denoted by ${M}_{0|{\rm rpd}}^1\otimes {M}_{0|{\rm rpd}}^2 \otimes ...\otimes (\bm{I}_{2}- {M}_{0|{\rm rpd}}^k) \otimes ... \otimes {M}_{0|{\rm rpd}}^{\textit{M}}$, which can be written as the sum of two terms, ${M}_{0|{\rm rpd}}^1\otimes {M}_{0|{\rm rpd}}^2 \otimes ...\otimes \bm{I}_{2} \otimes ... \otimes {M}_{0|{\rm rpd}}^{\textit{M}}$ and $- {M}_{0|{\rm rpd}}^1\otimes {M}_{0|{\rm rpd}}^2 \otimes ...\otimes  {M}_{0|{\rm rpd}}^k \otimes ... \otimes {M}_{0|{\rm rpd}}^{\textit{M}}$. The second term is just $-P_{\bm{00...0}}$, which can be computed via Eq.~\ref{p0}. The first term can be thought of as the probability of measuring all modes of the reduced state $tr_{k}[\rho]$ in the vacuum state. Therefore it can be written as:
\begin{align}\label{pk}
     P({{\bm{0}}_k}|{\rm rpd};{\bm \sigma}) = \frac{\Pi_{i\neq k}(1-P_D^i)}{\sqrt{{\rm det}( \tr _{k}[ \bm{I}_{{2m}}+\bm{\sigma}])}} - P({{\bm{0}}}|{\rm rpd};{\bm \sigma}),
\end{align}
where, for any covariance matrix ${\bm \sigma}$, the operation $\tr _{k}[{\bm \sigma}]$ is understood as eliminating the $2$ rows and columns associated to the mode $k$. Following similar steps one can find the probability of other outcomes as well. However, one can restrict to a simplified {version of on/off} photo-detection with $m+2$ POVM elements, $M_{{\rm I}|{\rm rpd}}\coloneqq M_{{\bm 0}|{\rm rpd}}$, $M_{{\rm II}_k|{\rm rpd}}\coloneqq M_{{\bm 0}_k|{\rm rpd}}, \forall k$ and $M_{{\rm III}|{\rm rpd}}\coloneqq I - M_{{\rm I}|{\rm rpd}} - \sum_k M_{{\rm II}_k|{\rm rpd}}$. One can consider yet simpler version, in which one does not distinguish what mode is excited therefore we have only three POVM elements with $ M_{{\rm II}|{\rm rpd}}=\sum_k M_{{\rm II}_k|{\rm rpd}}$ while the other two elements remain the same. These simplified versions should be fairly comparable to the non-simplified version at low temperatures, because at these temperatures the other POVM elements should not play a significant role anyway.

\section{Conclusions}
The problem of thermometry of Gaussian states with experimentally feasible measurements was addressed. The classical Fisher information of Gaussian measurements was charachterized. Finding the best Gaussian measurement can be expressed as an optimisation problem in which one looks for a pure covariance matrix that maximises the classical Fisher information \eqref{eq:CFI_Gaussian_main}. This may be done numerically in general. For thermal states one may find analytical solutions. In particular, for single-mode states at thermal equilibrium the optimal Gaussian measurement was found analytically to be either Homodyne or Heterodyne detection depending on the temperature.  This is unlike the optimal non-Gaussian measurement, which is always a projection in the eigenbasis of the Hamiltonian of the system, regardless of the temperature. For a multimode scenario at thermal equilibrium we 
%\deleted{have speculated what the optimal Gaussian measurement is; our} 
conjecture that the optimal Gaussian measurement is composed of a Gaussian unitary that brings the system into its normal modes basis, followed by a local optimal measurement on each of the normal modes. The proof of this is equivalent to proving Eq.~\eqref{eq:equivalence_conj}, which remains an open question. Our numerical results support this conjecture, and we have proven this analytically for a scenario with all normal modes being equal. 

We further compare the performance of the Gaussian measurements as well as photo-detection-like measurements with the quantum Fisher information, for systems at thermal equilibrium. We see that at low temperatures {on/off} photo-detection-like measurements perform almost optimally, whereas at high temperatures Gaussian measurements are close to optimal and saturate the quantum Cram\'er-Rao bound. We expect that for systems that are out of thermal equilibrium these results do not hold. For example, it is shown in \cite{PhysRevA.96.062103,khan2021sub,PhysRevLett.122.030403} that at ultracold temperatures and for non-equilibrium systems, sometimes position measurement---which is a Gaussian measurement---is optimal. A deeper investigation in this direction will be the be subject of future works. 

Lastly, it would be interesting to consider the problem of thermometry with Gaussian measurements in the context of the Bayesian formalism, which has been the subject of few recent works~\cite{rubio2020global,mehboudi2021fundamental,jorgensen2021bayesian,alves2021bayesian}. In particular, since the optimal Gaussian measurement is temperature dependent, this adds an extra challenge into designing optimal thermometry protocols when initial uncertainty about the true temperature is non negligible.

\section*{Acknowledgements}
We are thankful to S.\ Rahimi-Keshari, F.\ Bakhshinezhad, J.\ Kolodynski, P.\ Sekatski, D.\ Rusca, and H.\ Zbinden for constructive discussions. This work was supported by Swiss National Science Foundation NCCR SwissMAP, the Government of Spain (FIS2020-TRANQI, ConTrAct FIS2017-83709-R, and Severo Ochoa CEX2019-000910-S), Fundacio Cellex, Fundacio Mir-Puig, Generalitat de Catalunya (CERCA, AGAUR SGR 1381 and QuantumCAT), the ERC AdG CERQUTE, the AXA Chair in Quantum Information Science. 
This project has received funding from the ``Presidencia de la Agencia Estatal de Investigaci\'on'' within the ``Convocatoria de tramitaci\'on anticipada, correspondiente al a\~no 2020, de las ayudas para contratos predoctorales (Ref. PRE2020-094374 PhD Fellowship) para la formaci\'on de doctores contemplada en el Subprograma Estatal de Formaci\'on del Programa Estatal de Promoci\'on del Talento y su Empleabilidad en I+D+i, en el marco del Plan Estatal de Investigaci\'on Cient\'ifica y Técnica y de Innovaci\'on 2017-2020,
cofinanciado por el Fondo Social Europeo''.
LL acknowledges support from Alexander von Humboldt Stiftung/Foundation. 
%-------------------------
%-------------------------
\onecolumngrid
\appendix
\section{More about Gaussian measurements}
In this Appendix we present some of the details of the toolboxes that were used in the main text. Please note that these results are only presented for self-containment and are not our original contributions.
\subsection{General form of Gaussian measurements}
In general, Gaussian measurements can be performed over an output state of any  Gaussian map on the system $\rho_{s}^M$, which is in general not a pure state, and will then be represented by more general POVM elements that read~\cite{holevo2019quantum,Kiukas_2013,PhysRevA.49.1567}
\begin{equation} 
	\label{eq:General_Gaussian_measurement}
		{M}_{\bm{a}|{s}}=\int\frac{\bdm{y}}{(2\pi)^{2m}}\,
		D\big(\bm{y}\bm{T}_{s}\big)e^{{-\bm{y}^\intercal\bm{N}_{s}\bm{y}^\intercal/2-i \bm{a} \bm{\Omega}\bm{y}}},
\end{equation}
where the complete positivity condition $\bm{N}_{s}-i\bm{T}_{s}\bm{\Omega}\bm{T}_{s}^\intercal/2\geqslant 0$ must hold. The outcome probabilities of this Gaussian measurement are then given by
\begin{align}
	p(\bm{a}|{s},\rho) & = \tr [\rho {M}_{\bm{a}|{s}}] = \int\frac{\bdm{y}}{(2\pi)^{2m}}\,
	\tr [D\big(\bm{y}\bm{T}_{s}\big)\rho]e^{{-\bm{y}^\intercal\bm{N}_{s}\bm{y}/2-i \bm{a}^\intercal \bm{\Omega}\bm{y}}}\nonumber\\
	& = \int\frac{\bdm{y}}{(2\pi)^{2m}}\,
	e^{{\bm{y}^\intercal\bm{T}_{s}\bm{\Omega}\bm{\sigma}\bm{\Omega}\bm{T}_{s}^\intercal\bm{y}/2} + i\bm{d}^\intercal\bm{\Omega}\bm{T}_{s}^\intercal\bm{y}}e^{{-\bm{y}^\intercal\bm{N}_{s}\bm{y}/2-i \bm{a}^\intercal \bm{\Omega}\bm{y}}}\\
	& = \frac{1}{(2\pi)^{2m}}\int d^{2m}\bm{y} e^{\frac{1}{2}\bm{y}^\intercal{\tilde{\bm{\sigma}}} \bm{y} + i {\tilde {\bm{d}}}^\intercal \bm{y} }.
\end{align}
where we have defined $\tilde{\bm{\sigma}} \coloneqq \bm{T}_{s}\bm{\Omega}\bm{\sigma}\bm{\Omega}\bm{T}_{s}^\intercal + \bm{N}_{s} $ and $\tilde {\bm{d}}^\intercal \coloneqq \bm{d}^\intercal\bm{\Omega}\bm{T}_{s}^\intercal - \bm{a}^\intercal \bm{\Omega}$. Integration gives
\begin{align}
	p(\bm{a}|{s},\rho) & = \frac{1}{(2\pi)^{m}\sqrt{\det{\tilde{\bm{\sigma}}}}} e^{\frac{1}{2}\tilde{\bm{d}}^\intercal{\tilde{\bm{\sigma}}}^{-1} \tilde{\bm{d}}}.
\end{align}
The previous scenario is revived by choosing $\bm{N}_{s}=\bm{\Omega}{\bm{\sigma}_{s}^M}\bm{\Omega}$ and $\bm{T}_{s} = \bm{I}_{2m}$.
%--------------
%--------------
\subsection{Local Gaussian measurements, reduced states and covariance matrices}
Suppose we have a multi-mode state and we perform a Gaussian measurement on a subsystem. What is the post measurement state of the remaining modes?
In principle, after performing some POVM on the system and observing an outcome associated with the element ${M}_{\bm{a}|{s}}$, the post measurement state is---up to a unitary---described by the density matrix
\begin{align}
\rho({\bm a}|s) = \frac{\sqrt{{M}_{\bm{a}|{s}}} \rho \sqrt{{M}_{\bm{a}|{s}}}}{\tr  [\sqrt{{M}_{\bm{a}|{s}}} \rho \sqrt{{M}_{\bm{a}|{s}}}] } = \frac{\sqrt{{M}_{\bm{a}|{s}}} \rho \sqrt{{M}_{\bm{a}|{s}}}}{P(\bm{a}|{s},\rho)},
\end{align} 
% with $\sqrt{{M}_{\bm{a}|{s}}} = D(\bm{a}) \rho_{\bm{x}}^{1/2} D^{\dagger}(\bm{a})/\sqrt{(2\pi)^{m}}$, where $M$ is the number of modes of the system. 
Despite the unitary freedom in the post measurement state, the reduced state of the modes over which \textit{we did not} perform the measurement is independent of this unitary.

Suppose we have a two-party density operator, $\rho_{AB}$, of $n+m$ modes, where party $A$ has access to the first $m$ modes and party $B$ has the remaining $n$ modes. A POVM measurement $\{{M}_{\bm{b}|{s}}\}$ is performed on part $B$. That is, the overall POVM elements are given by $\{I_A \otimes {M}_{\bm{b}|{s}}\}$.
% \begin{align}
% \rho_{\rm pm} = \frac{\bm{I}_A \otimes U.\bm{I}_A \otimes \sqrt{{M}_{\bm{a}|{s}}}~\rho_{AB}~\bm{I}_A \otimes \sqrt{{M}_{\bm{a}|{s}}}.\bm{I}_A \otimes U^{\dagger} } {P(\bm{a}|{s},\rho)}.
% \end{align}
Given that we observe the result ${\bm b}$, the post-measurement [reduced] state of party $A$ reads
\begin{align}
\rho_{A}({\bm b},s) \coloneqq \tr _B [\rho_{AB}({\bm b}|s)] = \frac{\tr _B[I_A \otimes {M}_{\bm{b}|{s}} ~ \rho_{AB}]}{P(\bm{b}|{s},\rho_{AB})}.
\end{align}
We are interested in knowing the covariance matrix of this reduced state. To this aim, we need to find the characteristic function, from which one can identify the covariance matrix.
\subsection*{Characteristic function of the reduced density matrix}
By definition, we have
\begin{align}
\chi(\bm{y}|\rho_{A}({\bm b},s) ) = \tr _A[D_A(\bm{y}) \rho_{A}({\bm b},s) ] 
= \frac{\tr [ D_A(\bm{y}) \otimes {M}_{\bm{b}|{s}} ~ \rho_{AB}]}{P(\bm{b}|{s},\rho)}.
\end{align}
Let us focus on the {numerator} since the denominator is only a normalisation factor independent of ${\bm y}$. 
By noting that $D_B({\bm 0}) = I_B$, this can be written as
\begin{align}
\tr [ D_A(\bm{y}) \otimes {M}_{\bm{b}|{s}} ~ \rho_{AB}] & = \tr [D_A(\bm{y})\otimes D_B({\bm 0}) ~ I_A\otimes {M}_{\bm{b}|{s}} ~ \rho_{AB}] \eqqcolon \tr [D_{AB}(\bm{y},{\bm 0}) ~ \varrho_{AB} ~ \rho_{AB}],
\end{align}
where we denote $D_{AB}(\bm{y},{\bm z}) = D_A(\bm{y})\otimes D_B({\bm z})$, and $\varrho_{AB} \coloneqq I_A\otimes {M}_{\bm{b}|{s}}$.
We can expand both $\varrho_{AB}$ and $\rho_{AB}$ in terms of the displacement operator by using Eq.~\eqref{eq:spqd_1}, such that
\begin{align}
\varrho_{AB}\rho_{AB} & = \frac{1}{(2\pi)^{n+m}} \iint d^{2(n+m)}\bm{x} d^{2(n+m)}\bm{z}~  \tr [\varrho_{AB} D(\bm{x})] \tr [\rho_{AB} D(\bm{z})] D^{\dagger}(\bm{x})D^{\dagger}(\bm{z})\nonumber\\
& = \frac{1}{(2\pi)^{n+m}} \iint d^{2(n+m)}\bm{x} d^{2(n+m)}\bm{z} ~ \chi(\bm{x}|\varrho_{AB}) \chi(\bm{z}|\rho_{AB}) D^{\dagger}(\bm{x})D^{\dagger}(\bm{z}),
\end{align}
where $n$ is the number of modes over which we performed the measurement.
Thus we can find the characteristic function of $\varrho_{AB}\rho_{AB}$ using the above expression
\begin{align}
\tr [D_{AB}(\bm{y}) \varrho_{AB}\rho_{AB}]& = \frac{1}{(2\pi)^{n+m}}\iint d^{2(n+m)}\bm{x} d^{2(n+m)}\bm{z} ~ \chi(\bm{x}|\varrho_{AB}) \chi(\bm{z}|\rho_{AB})\tr [D(\bm{y})  D^{\dagger}(\bm{x})D^{\dagger}(\bm{z})]\nonumber\\
& = \frac{1}{(2\pi)^{n+m}}\iint d^{2(n+m)}\bm{x} d^{2(n+m)}\bm{z} ~ \chi(\bm{x}|\varrho_{AB}) \chi(\bm{z}|\rho_{AB}) \nonumber\\
&~~ \times e^{-i\bm{x}^\intercal\bm{\Omega}\bm{z}/2} e^{i\bm{y}^\intercal\bm{\Omega}(\bm{z}+\bm{x})/2} ~ \tr [D(\bm{y}-\bm{x}-\bm{z})]
\nonumber\\
& = \iint d^{2(n+m)}\bm{x} d^{2(n+m)}\bm{z} ~ \chi(\bm{x}|\varrho_{AB}) \chi(\bm{z}|\rho_{AB}) e^{-i\bm{x}^\intercal\bm{\Omega}\bm{z}/2} ~ \delta^{2(n+m)}(\bm{y}-\bm{x}-\bm{z})\nonumber\\
& = \int d^{2(n+m)}\bm{x}~ \chi(\bm{x}|\varrho_{AB}) \chi(\bm{y}-\bm{x}|\rho_{AB}) e^{-i\bm{x}^\intercal\bm{\Omega}\bm{y}/2},
\end{align}
where we used that $\tr [D(\bm a)]=(2\pi)^{n+m}\delta^{2(n+m)}(\bm{a})$.
Let us denote with $\{\bm{\sigma}_{AB}, \bm{d}_{AB}\}$ and $\{{\bm{\sigma}_{s}^M}, \bm{d}_{s}^{M}\}$ the covariance matrix and the displacement vectors of $\rho_{AB} $ and $\varrho_{AB}$, respectively. Therefore
\begin{align}
& \tr [D_{AB}(\bm{y}) \varrho_{AB}\rho_{AB}]\\
& = \int d^{2(n+m)}\bm{x} e^{\frac{1}{2}\bm{x}^\intercal\bm{\Omega} {\bm{\sigma}_{s}^M} \bm{\Omega} \bm{x} + i (\bm{d}_{s}^{M})^\intercal\bm{\Omega} \bm{x} } e^{\frac{1}{2}(\bm{y}-\bm{x})^\intercal\bm{\Omega} \bm{\sigma}_{AB} \bm{\Omega} (\bm{y}-\bm{x}) + i \bm{d}_{AB}^\intercal\bm{\Omega} (\bm{y}-\bm{x}) } e^{-i\bm{x}^\intercal\bm{\Omega}\bm{y}/2}\nonumber\\
& = e^{\frac{1}{2}\bm{y}^\intercal\bm{\Omega} \bm{\sigma}_{AB} \bm{\Omega}\bm{y} + i \bm{d}_{AB}^\intercal\bm{\Omega} \bm{y}}
\int d^{2(n+m)}\bm{x} e^{\frac{1}{2}\bm{x}^\intercal\bm{\Omega} (\bm{\sigma}_{AB} + {\bm \sigma}_{s}^M) \bm{\Omega} \bm{x} + \left(i(\bm{d}_{s}^{M})^\intercal - i\bm{d}_{AB}^\intercal + i\bm{y}^\intercal/2 - \bm{y}^\intercal\bm{\Omega}\bm{\sigma}_{AB}\right)\bm{\Omega} \bm{x} }\nonumber\\
& = e^{\frac{1}{2}\bm{y}^\intercal\bm{\Omega} \bm{\sigma}_{AB} \bm{\Omega}\bm{y} + i \bm{d}_{AB}^\intercal\bm{\Omega} \bm{y} }
\int d^{2(n+m)}\bm{x} e^{\frac{1}{2}\bm{x}^\intercal\bm{\tilde{\sigma}}\bm{x}} e^{i\bm{\tilde{d}}^\intercal \bm{x} },
\end{align}
with $\bm{\tilde{\sigma}} \coloneqq \bm{\Omega} (\bm{\sigma}_{AB} + {{\bm{\sigma}_{s}^M}}) \bm{\Omega}$, and $\bm{\tilde{d}}^\intercal \coloneqq \left((\bm{d}_{s}^{M})^\intercal - \bm{d}_{AB}^\intercal + \bm{y}^\intercal/2 +i \bm{x}^\intercal\bm{\Omega}\bm{\sigma}_{AB}\right)\bm{\Omega} $. 
The integral can also be evaluated using \eqref{eq:Gaussian_Multid_int}, which leads to
\begin{align}
\int d^{2(n+m)}\bm{x} e^{\frac{1}{2}\bm{x}^\intercal\bm{\tilde{\sigma}}\bm{x}} e^{i\bm{\tilde{d}}^\intercal \bm{x} } 
& = \sqrt{\frac{(2\pi)^{2(n+m)}}{{\rm det} {\tilde {\bm{\sigma}}}}} e^{\frac{1}{2}\bm{\tilde{d}}^\intercal{\bm{\tilde{\sigma}}}^{-1} \bm{\tilde{d}}}.
\end{align}
For simplicity let us assume that $\bm{d}_{AB} = \bm{d}_{s}^{M} = 0$. Then we have
\begin{align}
\tr [D(\bm{y}) \varrho_{AB}\rho_{AB}]
& \propto 
e^{\frac{1}{2}\bm{y}^\intercal\bm{\Omega} \bm{\sigma}_{AB} \bm{\Omega}\bm{y}}
e^{\frac{1}{2}\bm{\tilde{d}}^\intercal{\bm{\tilde{\sigma}}}^{-1} \bm{\tilde{d}}}\nonumber\\
& = e^{\frac{1}{2}\bm{y}^\intercal\bm{\Omega} \bm{\sigma}_{AB} \bm{\Omega}\bm{y}}
e^{-\frac{1}{2}\bm{y}^\intercal(\bm{I}/2 + i\bm{\Omega}\bm{\sigma}_{AB})(\bm{\sigma}_{AB} + {{\bm{\sigma}_{s}^M}})^{-1}(\bm{I}/2 + i\bm{\Omega}\bm{\sigma}_{AB})^\intercal\bm{y}}\nonumber\\
& = \exp \left\{{\frac{1}{2}\bm{y}^\intercal\left[\bm{\Omega} \bm{\sigma}_{AB} \bm{\Omega} - (\bm{I}/2 + i\bm{\Omega}\bm{\sigma}_{AB})(\bm{\sigma}_{AB} + {\bm\sigma}_{s}^M)^{-1}(\bm{I}/2 + i\bm{\Omega}\bm{\sigma}_{AB})^\intercal\right]\bm{y}}\right\}.
\end{align}
We can now simply find the post-measurement covariance matrix of party $A$ after the Gaussian measurement ${M}_{\bm{a}|{s}}$ of party $B$ by setting $D_{AB}(\bm{y})\to D_{AB}(\bm{y}_A,{\bm 0}_B)$ in the exponent of the above relation, and by noticing that the covariance matrix of the measurement operator can be written as ${\bm{\sigma}_{s}^M} = \lim_{r \to\infty}r ~ {\bm I}_A \oplus \bm{\sigma}_{s,B}$.
Using the fact that
\begin{align}
(\bm{\sigma}_{AB} + {\bm{\sigma}_{s}^M})^{-1} = \lim_{r\to\infty} (\bm{\sigma}_{AB} + r ~ {\bm I}_A \oplus \bm{\sigma}_{s,B})^{-1} = {\bm 0}_A \oplus (\bm{\sigma}_B + \bm{\sigma}_{s,B})^{-1},
\end{align}
we have
\begin{align}
& \tr [D_{AB}(\bm{y}_A,{\bm 0}_B) \varrho_{AB}\rho_{AB}]\nonumber\\
& \propto 
\exp \left\{{\frac{1}{2}(\bm{y}_A,{\bm 0}_B)^\intercal\left[\bm{\Omega} \bm{\sigma}_{AB} \bm{\Omega} - (\bm{I}/2 + i\bm{\Omega}\bm{\sigma}_{AB})(\bm{\sigma}_{AB} + {\bm{\sigma}_{s}^M})^{-1}(\bm{I}/2 + i\bm{\Omega}\bm{\sigma}_{AB})^\intercal\right](\bm{y}_A,{\bm 0}_B)}\right\}\nonumber\\
& = \exp \left\{{\frac{1}{2}(\bm{y}_A,{\bm 0}_B)^\intercal\left[\bm{\Omega} \bm{\sigma}_{AB} \bm{\Omega} - (\bm{I}/2 + i\bm{\Omega}\bm{\sigma}_{AB})({\bm 0}_A \oplus (\bm{\sigma}_B + \bm{\sigma}_{s,B})^{-1})(\bm{I}/2 + i\bm{\Omega}\bm{\sigma}_{AB})^\intercal\right](\bm{y}_A,{\bm 0}_B)}\right\}\nonumber\\
& = \exp \left\{\frac{1}{2}\bm{y}_A^\intercal\bm{\Omega}_A\left[\bm{\sigma}_{A} - \bm{\sigma}_{AB}^{\rm Corr}(\bm{\sigma}_B + \bm{\sigma}_{s,B})^{-1}(\bm{\sigma}_{AB}^{\rm Corr})^\intercal\right] \bm{\Omega}_A\bm{y}_A\right\}
\end{align}
in which $\bm{\sigma}_{AB}^{\rm Corr}$ is the correlation block of the covariance matrix $\bm{\sigma}_{AB}$, such that
\begin{align}
\bm{\sigma}_{AB} = \left[
\begin{array}{cc}
\bm{\sigma}_{A} & \bm{\sigma}_{AB}^{\rm corr}\\
(\bm{\sigma}_{AB}^{\rm corr})^\intercal & \bm{\sigma}_{B}\\
\end{array}
\right].
\end{align}
Thus, the covariance matrix of $A$, after $B$'s measurement is independent of the outcome and reads
\begin{align}
\bm{\sigma}_{A|s} = \bm{\sigma}_{A} - \bm{\sigma}_{AB}^{\rm Corr}(\bm{\sigma}_B + \bm{\sigma}_{s,B})^{-1}(\bm{\sigma}_{AB}^{\rm Corr})^\intercal.
\end{align}
One can check that the same result holds for $\bm{d}_{AB} \neq \bm{d}_{s}^{M} \neq 0$. See Refs.~\cite{nogo3,doi:10.1080/00107514.2015.1125624} for earlier derivations.
%--------------
%--------------
\section{The classical Fisher information of Gaussian probability distributions}
Given a Gaussian probability distribution $p({\bm a}|\theta)$ which can be written as
\begin{align}
	p(\bm{a}|\theta) & = \frac{e^{-\frac{1}{2}\bm{a}^\intercal{\bm{\sigma}}^{-1} \bm{a}}}{(2\pi)^{M}\sqrt{\det{\bm{\sigma}}}},
\end{align}
we want to find the classical Fisher information. Note that we are assuming that the displacement vector is zero. Similar results are expected if the displacement vector is non zero, but is independent of the parameter. If neither of these are the case, one should use a more compete version of the Fisher information that is given in Eq.~\eqref{eq:CFI_Gaussian_main}---see also refs.~\cite{monras2013phase,malago2015information}. Let us define ${\tilde{\bm \sigma}}\coloneqq {\bm{\sigma}}^{-1}$.
Firstly, note that
\begin{align}
	\partial_{\theta} \log p(\bm{a}|\theta) = -\frac{\partial_{\theta} \det {\bm{\sigma}} }{2\det {\bm{\sigma}}} - \frac{1}{2}\bm{a}_k\bm{a}_l\partial_{\theta}{\tilde{\bm \sigma}}_{kl},
\end{align}
where summation over repeated indices is understood. Recall the Jacobi's formula for the derivative of determinant of [symmetric] matrices
\begin{align}
	\partial_{\theta} \det {\bm A} & = \Tr [{\bm A}^{-1} \partial_{\theta} {\bm A}] \det {\bm A}, \\
	\partial_{{\bm A}_{kl}} \det {\bm A} & = {\bm A}^{-1}_{kl}\det {\bm A}.
\end{align}
By substitution in the expression of classical Fisher information one gets
\begin{align}
	F & = \int d^{2M}{\bm a} p(\bm{a}|\theta)[\partial_{\theta} \log p(\bm{a}|\theta)]^2 
	= \int d^{2M}{\bm a} p(\bm{a}|\theta) \left[-\frac{1}{2}\Tr [{\tilde{\bm \sigma}}\partial_{\theta}{\bm \sigma}] - \frac{1}{2}{\bm a}_k{\bm a}_l\partial_{\theta}{\tilde{\bm \sigma}}_{kl}\right]^2\nonumber\\
	& = \frac{1}{4} \Tr [{\tilde{\bm \sigma}}\partial_{\theta}{\bm \sigma}]^2 
	+ \frac{1}{2}\Tr [{\tilde{\bm \sigma}}\partial_{\theta}{\bm \sigma}] \partial_{\theta}{\tilde{\bm \sigma}}_{kl}\int d^{2M}{\bm a} ~{\bm a}_k{\bm a}_l p(\bm{a}|\theta) 
	+ \frac{1}{4} \partial_{\theta}{\tilde{\bm \sigma}}_{kl}\partial_{\theta}{\tilde{\bm \sigma}}_{mn} \int d^{2M}{\bm a}~{\bm a}_k{\bm a}_l  {\bm a}_m{\bm a}_np(\bm{a}|\theta)\nonumber\\
	& = \frac{1}{4} \Tr [{\tilde{\bm \sigma}}\partial_{\theta}{\bm \sigma}]^2 
	+ \frac{\Tr [{\tilde{\bm \sigma}}\partial_{\theta}{\bm \sigma}] \partial_{\theta}{\tilde{\bm \sigma}}_{kl}}{2(2\pi)^{M}\sqrt{\det{\bm{\sigma}}}}(-2\partial_{\tilde{\bm \sigma}_{kl}})\int d^{2M}{\bm a} ~ e^{-\frac{1}{2}\bm{a}^\intercal{\tilde{\bm \sigma}} \bm{a}}\nonumber\\
	&+ \frac{\partial_{\theta}{\tilde{\bm \sigma}}_{kl}\partial_{\theta}{\tilde{\bm \sigma}}_{mn}}{4(2\pi)^{M}\sqrt{\det{\bm{\sigma}}}}(4\partial_{\tilde{\bm \sigma}_{kl}} \partial_{\tilde{\bm \sigma}_{mn}})\int d^{2M}{\bm a} ~ e^{-\frac{1}{2}\bm{a}^\intercal{\tilde{\bm \sigma}} \bm{a}}\nonumber\\
	& = \frac{1}{4} \Tr [{\tilde{\bm \sigma}}\partial_{\theta}{\bm \sigma}]^2
	- \frac{\Tr [{\tilde{\bm \sigma}}\partial_{\theta}{\bm \sigma}] \partial_{\theta}{\tilde{\bm \sigma}}_{kl}}{\sqrt{\det{\bm{\sigma}}}}(\partial_{\tilde{\bm \sigma}_{kl}}) \det {\tilde{\bm \sigma}}^{-\frac{1}{2}}
	+ \frac{\partial_{\theta}{\tilde{\bm \sigma}}_{kl}\partial_{\theta}{\tilde{\bm \sigma}}_{mn}}{\sqrt{\det{\bm{\sigma}}}}(\partial_{\tilde{\bm \sigma}_{kl}} \partial_{\tilde{\bm \sigma}_{mn}})\det {\tilde{\bm \sigma}}^{-\frac{1}{2}}\nonumber\\
	& = \frac{1}{4} \Tr [{\tilde{\bm \sigma}}\partial_{\theta}{\bm \sigma}]^2 
	+\frac{1}{2} {\Tr [{\tilde{\bm \sigma}}\partial_{\theta}{\bm \sigma}] \partial_{\theta}{\tilde{\bm \sigma}}_{kl}} [{\tilde{\bm \sigma}}^{-1}]_{kl}
	- \frac{\partial_{\theta}{\tilde{\bm \sigma}}_{kl}\partial_{\theta}{\tilde{\bm \sigma}}_{mn}}{2\sqrt{\det{\bm{\sigma}}}}(\partial_{\tilde{\bm \sigma}_{mn}})([{\tilde{\bm \sigma}}^{-1}]_{kl}\det {\tilde{\bm \sigma}}^{-\frac{1}{2}})\nonumber\\
	& = \frac{1}{4} \Tr [{\tilde{\bm \sigma}}\partial_{\theta}{\bm \sigma}]^2 
	+ \frac{1}{2} \Tr [{\tilde{\bm \sigma}}\partial_{\theta}{\bm \sigma}] \Tr [{\bm \sigma}\partial_{\theta}\tilde{\bm \sigma}]
	+ \partial_{\theta}{\tilde{\bm \sigma}}_{kl}\partial_{\theta}{\tilde{\bm \sigma}}_{mn}\left(\frac{1}{4}[{\tilde{\bm \sigma}}^{-1}]_{kl}[{\tilde{\bm \sigma}}^{-1}]_{mn}
	+ \frac{1}{2}[{\tilde{\bm \sigma}}^{-1}]_{mk}[{\tilde{\bm \sigma}}^{-1}]_{nl} \right) \nonumber\\
	& = \frac{1}{4} \Tr [{\tilde{\bm \sigma}}\partial_{\theta}{\bm \sigma}]^2 
	- \frac{1}{2} \Tr [{\tilde{\bm \sigma}}\partial_{\theta}{\bm \sigma}]^2 
	+ \frac{1}{4} \Tr [{\bm \sigma}\partial_{\theta}{\tilde{\bm \sigma}}]^2
	+ \frac{1}{2} \Tr [(\partial_{\theta}{\tilde{\bm \sigma}}) {\bm \sigma} (\partial_{\theta}{\tilde{\bm \sigma}}) {\bm \sigma}]\nonumber\\
	&=\frac{1}{2} \Tr [ {\bm \sigma}^{-1} (\partial_{\theta}{\bm \sigma}) {\bm \sigma}^{-1}(\partial_{\theta}{\bm \sigma})] 
	= \frac{1}{2} \Tr [ ({\bm \sigma}^{-1} \partial_{\theta}{\bm \sigma})^2]
\end{align}
where we benefited multiple uses of 
$
{\tilde{\bm \sigma}}\partial_{\theta}{\bm \sigma} = -{\bm \sigma}\partial_{\theta}\tilde{\bm \sigma}$, and also 
\begin{align}
\partial_{\tilde{\bm \sigma}_{mn}}[{\tilde{\bm \sigma}}^{-1}]_{kl} &= -[{\tilde{\bm \sigma}}^{-1} (\partial_{\tilde{\bm \sigma}_{mn}}{\tilde{\bm \sigma}}) {\tilde{\bm \sigma}}^{-1}]_{kl} = - [{\tilde{\bm \sigma}}^{-1}]_{kk^{\prime}} [\partial_{\tilde{\bm \sigma}_{mn}}{\tilde{\bm \sigma}}]_{k^{\prime}l^{\prime}}[{\tilde{\bm \sigma}}^{-1}]_{l^{\prime}l} = - [{\tilde{\bm \sigma}}^{-1}]_{kk^{\prime}} \delta_{mk^{\prime}}\delta_{nl^{\prime}}[{\tilde{\bm \sigma}}^{-1}]_{l^{\prime}l} \nonumber\\
&= - [{\tilde{\bm \sigma}}^{-1}]_{km}[{\tilde{\bm \sigma}}^{-1}]_{nl}.
\end{align}
%--------------
%--------------
\section{The maximum likelihood estimator}\label{sec:MLE}
In this Appendix, through an example we provide details on how one can actually process experimental outcomes to assign an estimate to the unknown parameter---namely temperature. In local parameter estimation the maximum likelihood estimator (MLE) is known to saturate the Cram\'er-Rao bound~\cite{paris2009quantum}. In global estimation strategies, however, the MLE saturates the bound in the asymptotic limit of large number of repetitions of the measurement. Nonetheless, it is well-know that the MLE can saturate the bound when the outcome probability distribution is described by the ``Gaussian family''---see e.g., section 3.1.2.4 of ~\cite{kolodynski2014precision} and references therein. This is indeed the case for our Gaussian measurements.

The Likelihood function is defined as the probability of observing the outcome ``${\bm a}$'' if we performed some measurement ${\bm \sigma}^M_s$ on the state ${\bm \sigma}$. By using Eq.~\eqref{eq:prob_comp} we have
\begin{align}
    L(T|{\bm a};{\bm \sigma}(T),{\bm d};{\bm \sigma}^M_s) \coloneqq p({\bm a}|{\bm \sigma}(T),{\bm d};{\bm \sigma}_s^M) =   \frac{e^{-\frac{1}{2}({\bm a}-{\bm d})^\intercal {({\bm \sigma}(T)+{\bm \sigma}^{M}_{s})}^{-1}({\bm a}-{\bm d})}}{2\pi\sqrt{{\rm det} ({\bm \sigma}(T)+{\bm \sigma}^{M}_{s})}}.
\end{align}
The MLE, assigns the estimate ${\tilde T}$ to the unknown parameter that maximises the likelihood or the log likelihood. That is
\begin{align}
    & \partial_{T} \log L(T|{\bm a};{\bm \sigma}(T),{\bm d};{\bm \sigma}^M_s)|_{\tilde T} =0 \nonumber\\
    & \Rightarrow 
    -\frac{1}{2}\partial_{T} \log({\rm det} ({\bm \sigma}(T)+{\bm \sigma}^{M}_{s}))|_{\tilde T} -\frac{1}{2}~({\bm a}-{\bm d})^\intercal~\partial_{T}|({\bm \sigma}(T)+{\bm \sigma}^{M}_{s})^{-1}~({\bm a}-{\bm d}) |_{\tilde T} = 0
\end{align}
By using the Jacobie's formula for derivative of the determinant, and the formula for derivative of inverse matrix, this condition reduces to
\begin{align}
    ({\bm a}-{\bm d})^\intercal~({\bm \sigma}(T)+{\bm \sigma}^{M}_{s})^{-1} [\partial_T{\bm \sigma}(T)]({\bm \sigma}(T)+{\bm \sigma}^{M}_{s})^{-1}~({\bm a}-{\bm d})|_{\tilde T} = {\rm Tr}[({\bm \sigma}(T)+{\bm \sigma}^{M}_{s})^{-1} {\bm \sigma}_s^{\prime}]_{\tilde T}.
\end{align}
Now let's consider a single mode scenario in which the measurement is Homodyne detection with ${\bm \sigma}_s^M = 1/2\lim_{r\to 0}{\rm diag}[r,~1/r]$. In this case, we are projecting the state into the first quadrature. The above condition reduces to 
\begin{align}
    \frac{\partial_{T}[{\bm \sigma}(T)]_{1,1} }{[{\bm \sigma}(T)]_{1,1}^2}\left({\bm a}_1-{\bm d}_1\right)^2 \Big|_{\tilde T}= \frac{\partial_{T}[{\bm \sigma}(T)]_{1,1}}{[{\bm \sigma}(T)]_{1,1}}\Big|_{\tilde T}
    \Rightarrow \left({\bm a}_1-{\bm d}_1\right)^2 = [{\bm \sigma}({\tilde T})]_{1,1},
\end{align}
which is quite expected.

In order to convert this to a temperature estimate, we can use the parameter dependence of ${\bm \sigma}(T)$.
As a relevant example, when the state is at thermal equilibrium, we have ${\bm \sigma}(T) = 1/2 \coth(\omega/2T) {\rm diag}[1/\omega^2, 1]$. As such, upon observing some outcome ``${\bm a}_1$'', we assign the following estimate to the temperature
\begin{align}
    {\tilde T} = \frac{\omega}{2 {\rm acoth}\Big(2\omega^2 \left({\bm a}_1-{\bm d}_1\right)^2\Big)}.
\end{align}
Similarly, if we repeat the measurement $k$ times, we have
\begin{align}
    {\tilde T} = \frac{\omega}{2 {\rm acoth}\Big(2\omega^2 \average{\left({\bm a}_1-{\bm d}_1\right)^2}\Big)},
\end{align}
with the angled brackets in $\average{\left({\bm a}_1-{\bm d}_1\right)^2}$ being the average over the $k$ measurement repetitions.
In Fig.~\ref{fig:MLE_HD} we demonstrate a simulation of thermometry of a single mode thermal state by using Homodyne detection. One can observe how the estimate gets closer and closer to the true value of the temperature as we increase $k$---a direct result of the central limit theorem. Moreover, one can see that the estimation error follows the trend of the Cram\'er-Rao bound.
\begin{figure}
    \centering
    \includegraphics[width=.5\linewidth]{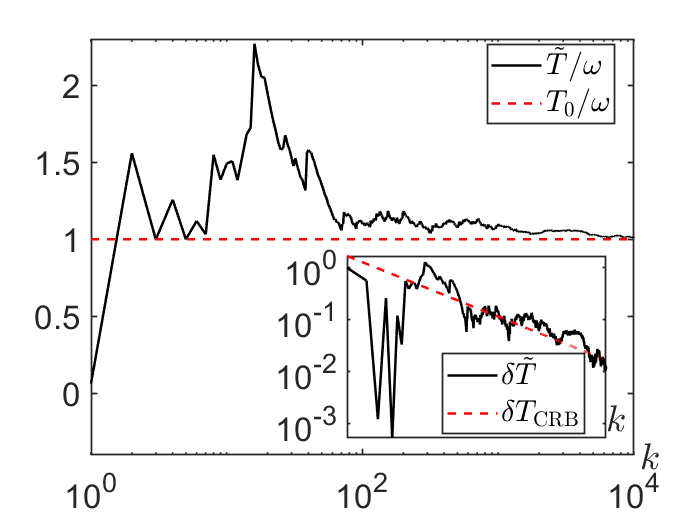}
    \caption{Loglog plot illustrating thermometry of a single Harmonic oscillator with Homodyne detection. The solid black line shows a (single) trajectory of the temperature estimate---calculated by simulating the measurement outcomes $a_1$ according to the true temperature $T_0$, and processing the outcomes by using the maximum likelihood estimator. One can clearly see that the estimate converges to the true temperature $T_0$ as we increase the measurement repetitions $k$. In the inset, we depict the relative error $\delta {\tilde T}\coloneqq ({\tilde T}-T_0)/T_0$ and benchmark it against the Cram\'er-Rao bound $\delta T_{\rm CRB} = 1/\sqrt{k {\cal F}({\bm \sigma}(T);{\bm \sigma}^M_{\rm HD})}$. Here we set $\omega=1$. }
    \label{fig:MLE_HD}
\end{figure}
%--------------
%--------------
%------------------------
\section{Imperfect Homodyne detection}\label{App:HD}
%------------------------
\begin{figure}[H]
    \centering
    \includegraphics[width=.5\linewidth]{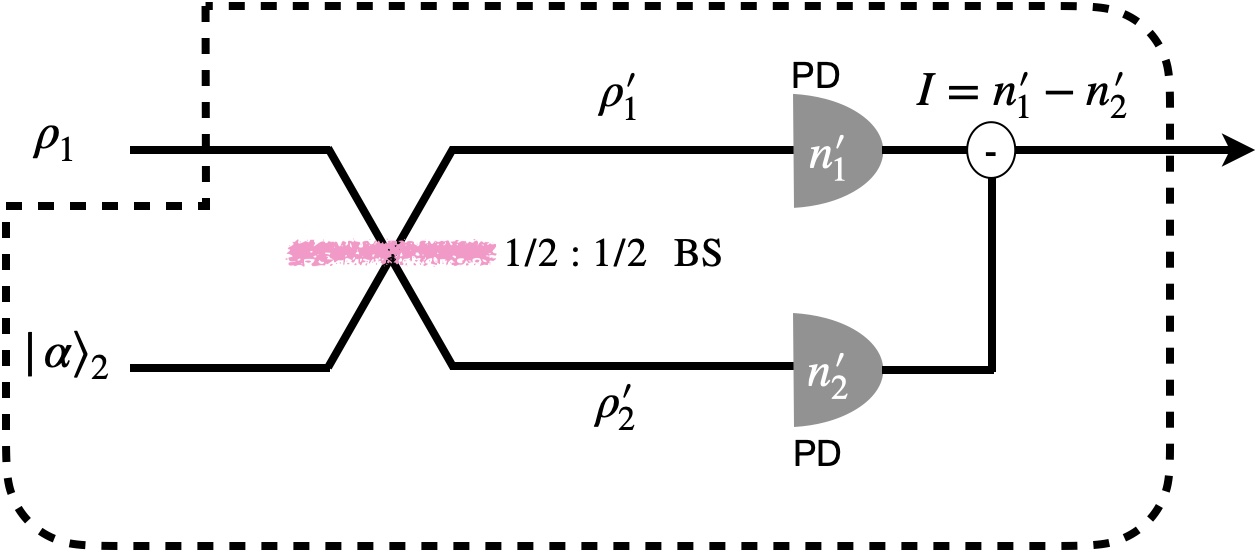}
    \caption{Schematic of Homodyne detection; that is everything within the dashed box. The signal $\rho_1$ enters from the upper arm, and interferes with a local oscillator (a coherent state) in a $1/2:1/2$ beam splitter. The final output is the difference of the photocurrent from the two photodetectos, i.e., $I = n_1^{\prime}-n_2^{\prime}$. If the local oscillator is highly displaced in either of the quadratures, say $\average{q_2}\gg 1$ while keeping the other quadrature zero i.e., $\average{p_2}=0$, then the outcome is proportional to the $q_1$ quadrature of the signal, i.e., $I\propto q_1$. Similarly one can measure the other quadrature. However, if the local oscillator has a limited energy, such a perfect quadrature measurement might be affected. For thermometry task at thermal equilibrium this imperfection adds a bias to the temperature estimate.}
    \label{fig:HD_setup}
\end{figure}
%------------------------
%------------------------
Following~\cite{PhysRevA.42.474}, we note that measuring the quadratures may be implemented by Homodyne detection, that is by interfering the signal and a local oscillator through a balanced beam splitter. The output modes will be photodetected and the currents will be subtracted (see Fig.~\ref{fig:HD_setup}). Perfect quadrature measurement requires a signal that has infinite energy. In many practical situations, the energy of the local oscillator is orders of magnitude higher than the signal which leads to very precise quadrature measurement---see e.g., \cite{lodewyck:tel-00130680}. Nonetheless, in theory it is not be possible to have unlimited energy; and one can study its impact on the measurement precision. Below we examine the correction due to finite energy.

Firstly, note that the beam splitter is a Gaussian operation which can be represented by the following symplectic transformation
\begin{align}
    {\bm S}_{\rm BS} =\frac{1}{\sqrt{2}} \left[\begin{array}{cc}
    \cos(\theta){\bm I}_2 & \sin(\theta){\bm I}_2\\
    \sin(\theta){\bm I}_2 & -\cos(\theta){\bm I}_2
    \end{array}
    \right],
\end{align}
where ${\bm I}_2$ represents the $2\times 2$ identity matrix. The beam splitter transforms the quadratures as follows
\begin{align}
    R \mapsto R^{\prime} = {\bm S}_{\rm BS} R = \frac{1}{\sqrt{2}}\left[
    \begin{array}{c}
    \cos(\theta)q_1+\sin(\theta)q_2\\
    \cos(\theta)p_1+\sin(\theta)p_2\\
    \sin(\theta)q_1-\cos(\theta)q_2\\
    \sin(\theta)p_1-\cos(\theta)p_2
    \end{array}
    \right],
\end{align}
with the first two quadrature representing the signal, and the second two representing the local oscillator. 
For a balanced beam splitter we have $\theta = \pi/4$ which leads to
\begin{align}
    {\bm S}_{\rm BS} =\frac{1}{\sqrt{2}} \left[\begin{array}{cc}
    {\bm I}_2 & {\bm I}_2\\
    {\bm I}_2 & -{\bm I}_2
    \end{array}
    \right],
\end{align}
which transforms the quadratures as follows
\begin{align}\label{eq:BS_quadrature_balanced}
    R \mapsto R^{\prime} = {\bm S}_{\rm BS} R = \frac{1}{\sqrt{2}}\left[
    \begin{array}{c}
    q_1+q_2\\
    p_1+p_2\\
    q_1-q_2\\
    p_1-p_2
    \end{array}
    \right],
\end{align}
The operator that represents the difference of the photocurrents from the two arms is given by 
\begin{align} 
I \coloneqq n_1^{\prime} - n_2^{\prime} =  q_1q_2 + p_1p_2, 
\end{align}
where we defined the number operators $n_k = (q_k-ip_k)(q_k+ip_k)/2~ k\in\{1,2\}$ and similarly $n_k^{\prime} = (q_k^{\prime}-ip_k^{\prime})(q_k^{\prime}+ip_k^{\prime})/2~ k\in\{1,2\}$.
Note that this operator is generally not Gaussian.
Nonetheless, we note that by choosing the local oscillator such that $\average{p_2}=0$ we have
\begin{align}
    \average{I} = \average{q_1}
    \average{q_2},\label{eq:HD_av}
\end{align}
which up to a constant reproduces the first moment statistics of the position quadrature.
However, as we saw in Appendix~\ref{sec:MLE}, the estimator for thermometry is built upon the second moment, and thus ${\rm Var}(q_1)$ plays the most crucial role in quality of Homodyne detection for thermometry at thermal equilibrium.
For the second moment, we have
\begin{align}
{\rm Var}(I) & = \average{I^2} - \average{I}^2 =  \average{(q_1q_2 + p_1p_2)^2} - \average{q_1}^2\average{q_2}^2 \nonumber\\
& = \average{q_1^2q_2^2 +p_1^2p_2^2 + q_1p_1q_2p_2 + p_1q_1p_2q_2 } - \average{q_1}^2\average{q_2}^2 \nonumber\\
&= \average{q_1^2}\average{q_2^2} + \average{p_1^2}\average{p_2^2} -\average{q_1}^2\average{q_2}^2 -\frac{1}{2},
\end{align}
where we used $[q_2,p_2]=i$ and we assumed the local oscillator has a diagonal covariance matrix, i.e., $\average{\{q_2,p_2\}} = 0$.

In the last equation, we have some extra terms in ${\rm Var}(I)$ that contain information about the momentum quadrature. To get rid of them, we prepare the local oscillator in a coherent state with the covariance matrix ${\bm \sigma}_{\rm LO} = {\bm I}/2$ and with the displacement ${\bm d}_{\rm LO}=(\average{q_2}, 0)^{\intercal}$. We have
\begin{align}
   {\rm Var}(I) & = {\rm Var}(q_1)\average{q_2}^2 + \frac{1}{2}\left(\average{q_1}^2 + \average{p_1^2}-1
   \right),\label{eq:HD_var}
\end{align}
where we used that $\average{p_2^2}=1/2$. Equations \eqref{eq:HD_av} and \eqref{eq:HD_var} suggest that $I/\average{q_2}$ has the same mean value as the position quadrature $q_1$, while its variance is modified by an additional $\frac{1}{2\average{q_2}^2}\left(\average{q_1}^2 + \average{p_1^2}-1
   \right)\eqqcolon \Delta$.
In the limit of $\average{q_2}^2 \to \infty$ we can reproduce the quadrature statistics perfectly by using $q_1 = I/\average{q_2}$. However, this requires the local oscillator to have infinite energy. Nonetheless, for any situation that ${\rm Var}(q_1)\gg \Delta$, we can still have a good approximation---in such a case our estimate for temperature will have a bias that is proportional to $\Delta/{\rm Var}(q_1)$. While for local metrology schemes one can correct this bias (since we can exactly calculate it), for global thermometry protocols this is not possible.

\vspace{.5cm}
\section{Additivity of the Gaussian-measured Fisher information in a special case}

\begin{proposition}
Let ${\bm \sigma}(T)$ be a single-mode covariance matrix, and assume that ${\bm \sigma} = \nu {\bm I}_2$ and $\partial_T{\bm \sigma}=\mu {\bm I}_2$ are both proportional to the identity. Then for all $n$ it holds that
\begin{equation}
    \max_{{\bm \sigma}^M_s} {\cal F}^{\rm C}\left({\bm \sigma}^{\oplus n}; {\bm \sigma}^M_s\right) = n \max_{{\bm \gamma}^M_s} {\cal F}^{\rm C}\left( {\bm \sigma}; {\bm \gamma}^M_s\right) = n \mu^2 \max\left\{\frac{1}{\nu^2},\, \frac{2}{(1+\nu)^2} \right\} ,
\end{equation}
where the optimization on the left-hand side is over all (multimode) covariance matrices of the measurement ${\bm \sigma}^M_s$.
\end{proposition}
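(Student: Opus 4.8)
The plan is to exploit the fact that, for $n$ identical modes, the joint system covariance matrix and its derivative both become proportional to the identity, which collapses the Fisher information into a purely spectral quantity and removes any advantage from entangling the measurement. First I would note that ${\bm \sigma}^{\oplus n} = \nu {\bm I}_{2n}$ and $\partial_T {\bm \sigma}^{\oplus n} = \mu {\bm I}_{2n}$, so that by \eqref{eq:CFI_Gaussian_main}
\begin{align}
    {\cal F}^{\rm C}({\bm \sigma}^{\oplus n}; {\bm \sigma}^M_s) = \frac{\mu^2}{2}\Tr\left[(\nu {\bm I}_{2n} + {\bm \sigma}^M_s)^{-2}\right] = \frac{\mu^2}{2}\sum_{j=1}^{2n}\frac{1}{(\nu + \lambda_j)^2},
\end{align}
where $\lambda_1,\dots,\lambda_{2n}$ are the eigenvalues of ${\bm \sigma}^M_s$. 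Since $\nu {\bm I}_{2n}$ is scalar, the objective depends on the measurement only through this spectrum and is completely blind to the eigenvectors; this is the structural feature that will make the joint freedom irrelevant.

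By the Observation it suffices to optimize over pure measurements, ${\bm \sigma}^M_s = {\bm S}^M ({\bm S}^M)^\intercal$ with ${\bm S}^M$ symplectic. The next step is the key spectral fact: such a matrix is symmetric, positive-definite, and itself symplectic, and therefore its eigenvalues come in reciprocal pairs $\{r_i, 1/r_i\}_{i=1}^n$ with $r_i \geq 1$. This follows because a symplectic matrix ${\bm M}$ satisfies ${\bm M}{\bm \Omega}{\bm M}^\intercal = {\bm \Omega}$, which makes ${\bm M}^\intercal$ similar to ${\bm M}^{-1}$, so the spectrum is closed under $\lambda \mapsto 1/\lambda$, while positive-definiteness forces the pairs to be real and positive. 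Conversely, every such reciprocal-pair spectrum is realized by the pure product covariance matrix $\bigoplus_{i=1}^n \mathrm{diag}(r_i, 1/r_i)$, so optimizing over pure ${\bm \sigma}^M_s$ is equivalent to optimizing over reciprocal-pair spectra.

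Substituting the pairing, the Fisher information decouples completely,
\begin{align}
    {\cal F}^{\rm C}({\bm \sigma}^{\oplus n}; {\bm \sigma}^M_s) = \frac{\mu^2}{2}\sum_{i=1}^n\left[\frac{1}{(\nu + r_i)^2} + \frac{1}{(\nu + 1/r_i)^2}\right] = \sum_{i=1}^n {\cal F}^{\rm C}({\bm \sigma}; {\bm \sigma}^M_{r_i}),
\end{align}
a sum of $n$ independent copies of the single-mode objective. Maximizing each $r_i$ separately yields exactly $n$ times the single-mode maximum, and this value is attained by the corresponding product (local) measurement, so achievability matches the bound. The final equality then follows by inserting the single-mode optimum already computed in the main text and recorded in Table~\ref{table:CFI_Gaussian}.

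The main obstacle is the reciprocal-pair spectral fact for pure Gaussian covariance matrices, since it is this---combined with the scalar structure of $\nu {\bm I}_{2n}$---that forces the $2n$ eigenvalues to organize into $n$ independently optimizable single-mode pairs. Everything else is routine: the ``$\geq$'' direction is immediate from additivity of the classical Fisher information for product states and product measurements, and the ``$\leq$'' direction is precisely the statement that no non-product (entangling) pure measurement can produce a spectrum outside the reciprocal-pair family.
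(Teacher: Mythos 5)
Your proof is correct and follows essentially the same route as the paper's: reduce to a purely spectral objective using ${\bm \sigma}^{\oplus n} = \nu{\bm I}_{2n}$, invoke the reciprocal-pair structure $\{\kappa_j,\kappa_j^{-1}\}$ of the spectrum of a pure Gaussian covariance matrix, decouple the trace into $n$ independent single-mode terms, and maximize each separately, attaining the bound with a local (product) measurement. The only differences are cosmetic: you actually prove the spectral fact (via the symplectic similarity ${\bm M}^\intercal \sim {\bm M}^{-1}$) and explicitly cite the purity Observation and the achievability step, whereas the paper treats the spectrum statement as well known and leaves those reductions implicit.
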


\begin{proof}
Let ${\bm \gamma}^M_s$ be an arbitrary covariance matrix of an $n$-mode Gaussian pure state. Then the spectrum of ${\bm \gamma}^M_s$ is well known to be of the form $\kappa_1,\ldots, \kappa_n,\kappa_n^{-1},\ldots, \kappa_1^{-1}$, where $\kappa_1,\ldots,\kappa_n\geqslant 1$. Thus,
\begin{align*}
    {\cal F}^{\rm C}\left({\bm \sigma};{\bm \sigma}_s^{M}\right) &= \frac{1}{2}\Tr \left[\left(\left({\bm \sigma}^{\oplus n} + {\bm \gamma}_s^{M}\right)^{-1} (\partial_{T}{\bm \sigma})^{\oplus n}\right)^2 \right] \\
    &= \frac{\mu^2}{2} \Tr \left[\left(\nu {\bm I}_{2n} + {\bm \gamma}_s^{M}\right)^{-2} \right] \\
    &= \frac{\mu^2}{2} \sum_{j=1}^n \left( \left(\nu+\kappa_j\right)^{-2} + \left(\nu+\kappa^{-1}\right)^{-2} \right) .
\end{align*}
Hence, maximizing over ${\bm \gamma}^M_s$, that is to say, maximizing over all choices of $\kappa_j$ subjected to the above constraints, we obtain that
\begin{align*}
    \max_{{\bm \sigma}^M_s} {\cal F}^{\rm C}\left({\bm \sigma}^{\oplus n}; {\bm \sigma}^M_s\right) &= \max_{\kappa_1, \ldots, \kappa_n\geqslant 1} \frac{\mu^2}{2} \sum_{j=1}^n \left( \left(\nu+\kappa_j\right)^{-2} + \left(\nu+\kappa^{-1}\right)^{-2} \right) \\
    &= \frac{\mu^2}{2} \sum_{j=1}^n \max_{\kappa_j\geqslant 1} \left( \left(\nu+\kappa_j\right)^{-2} + \left(\nu+\kappa^{-1}\right)^{-2} \right) \\
    &= \frac{n \mu^2}{2} \max\left\{\frac{1}{\nu^2},\, \frac{2}{(1+\nu)^2} \right\} \\
    &= n \max_{{\bm \gamma}^M_s} {\cal F}^{\rm C}\left( {\bm \sigma}; {\bm \gamma}^M_s\right) .
\end{align*}
This establishes the additivity of the Gaussian-measured Fisher information in this special case.
\end{proof}
%-------------
%-------------
\section{Numerical optimisation, and the algorithm}
We aim to identify the measurement CM ${\bm \sigma}_{s}^M$ that maximises \eqref{eq:CFI_Gaussian_main}.
We are dealing with a nonlinear optimisation problem subject to the positive semi-definiteness criterion ${\bm \sigma}_{s}^M + i{\bm \Omega} \geqslant 0$. We already proved that we should have ${\bm \sigma}_{s}^M={\bm S}^M({\bm S}^{M})^{\intercal}$, with some symplectic transformation satisfying ${\bm S}^M{\bm \Omega}({\bm S}^{M})^{\intercal}={\bm \Omega}$. One can find the optimal ${\bm S}^M$ in MATLAB by using the code at the end of this section. Notice that this algorithm works for arbitrary parameter estimation problem and is not limited to thermometry. One just needs to replace ``${\rm dsigma}$'' with the derivative of the covariance matrix w.r.t. the parameter to be estimated.

\subsection{Examples of the simulation}
As we mentioned in the main text, our numerical simulations support the conjecture~\eqref{eq:equivalence_conj}. Here, we present two such simulations: (i) a scenario with two modes with different frequencies for different temperatures, and (ii) a scenario with several oscillators, each with a different frequency but for a finxed temperature. As one can see in Fig.~\ref{fig:CFI_local_vs_Global_vs_T}, in either of these simulations global Gaussian measurements have no advantage over local ones.
\begin{figure}
    \centering
    \includegraphics[width=.49\linewidth]{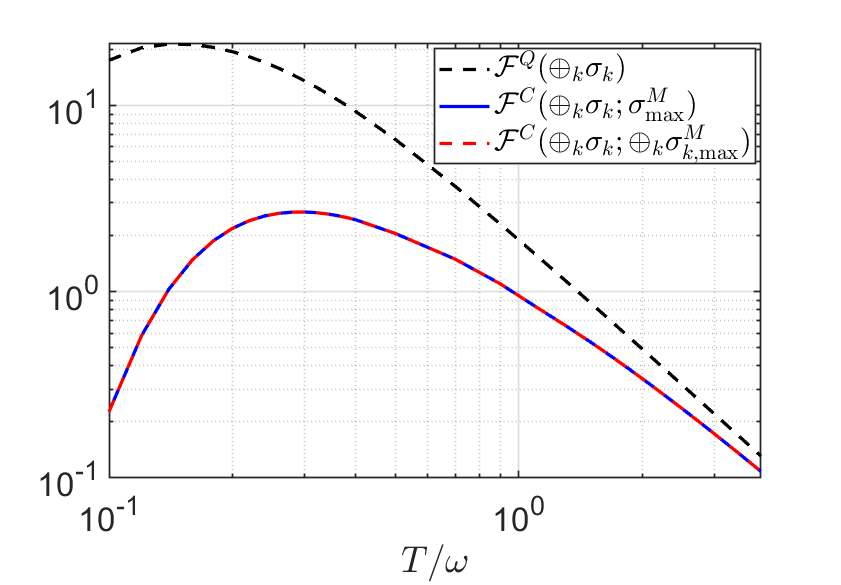}
    \includegraphics[width=.49\linewidth]{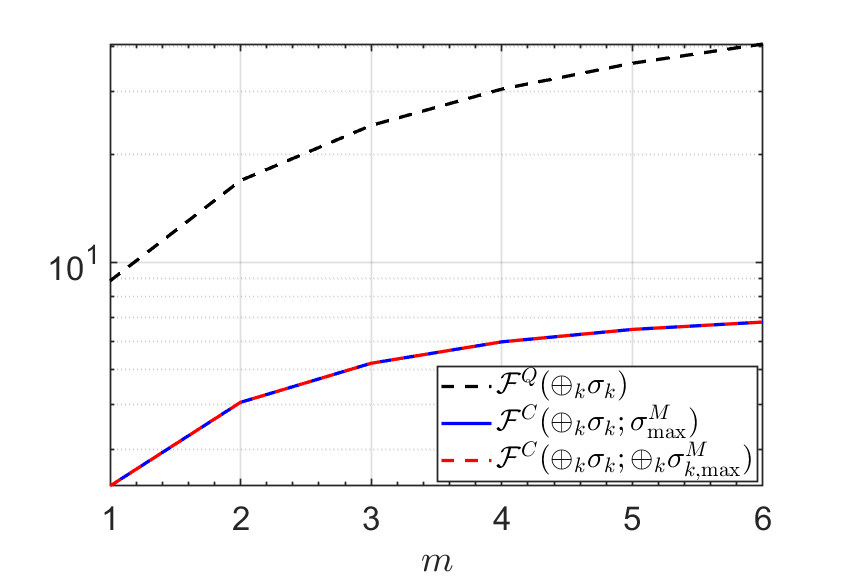}
    \caption{Testing the conjecture \eqref{eq:equivalence_conj}. Left---Two harmonic oscillators at frequencies $\omega_1=0.5\omega$ and $\omega_2=\omega$. Right---A scenario with $m$ harmonic oscillators, with frequencies $\omega_m=(0.5+0.1 m)\omega$, and for the fixed temperature $T=0.3\omega$. Both panels show no advantage in using the global Gaussian operations. Here we have set $\omega=1$.}
    \label{fig:CFI_local_vs_Global_vs_T}
\end{figure}
%%%%%%%%%%%%%%%%%%%%%%%%%%%%%%%%%%%%%%%%%%%%%
%\pagebreak
\lstinputlisting{myfun.m}
%\newpage

%--------------------------------------------------------------------------------------
%--------------------------------------------------------------------------------------
%	THE END
%--------------------------------------------------------------------------------------
%--------------------------------------------------------------------------------------
%\newpage
%\bibliographystyle{apsrev4-1}
\bibliography{Refs}
\end{document}